\newif\if@restonecol
\newcommand{\reminder}[1]{\textbf{\color{red}[* * #1 **]}}  % to fix
\newcommand{\hide}[1]{} %hide
\newcommand{\vpara}[1]{\vspace{0.1in}\noindent\textbf{#1 }}
\newcommand{\para}[1]{\vspace{0.01in}\noindent\textbf{#1 }}
\newcommand{\beq}[1]{\vspace{0in}\begin{equation}#1\end{equation}\vspace{0in}}
\newcommand{\beqn}[1]{\vspace{0in}\begin{eqnarray}#1\end{eqnarray}\vspace{0in}}
\newcommand{\besp}[1]{\begin{split}#1\end{split}}
\newcommand{\model}{\textbf{M$^3$D}}
\begin{document}

\title{
Modeling the Interplay Between Individual Behavior and Network Distributions
%Modeling Human Dynamics and Network Distributions
%A Unified Framework for Modeling Behavior Adoption in Social Networks
}

%\numberofauthors{1}
%
% Put no more than the first THREE authors in the \author command

% NOTE: All authors should be on the first page. For instructions
% for more than 3 authors, see:
% http://www.acm.org/sigs/pubs/proceed/sigfaq.htm#a18
\author{
\alignauthor Yang Yang$^{\dag}$, Jie Tang$^{\dag}$, Yuxiao Dong$^{\ddag}$, Qiaozhu Mei$^{\star}$, Reid A. Johnson$^{\ddag}$, Nitesh V. Chawla$^{\ddag}$\\
 \affaddr{$^{\dag}$Department of Computer Science and Technology, Tsinghua University}\\
 \affaddr{$^{\ddag}$Department of Computer Science and Engineering, University of Notre Dame}\\
 \affaddr{$^{\star}$Department of Electrical Engineering and Computer Science, University of Michigan}\\
% \vspace{0.15cm}
\email{\sf yyang.thu@gmail.com, jietang@tsinghua.edu.cn, \{ydong1, rjohns15, nchawla\}@nd.edu, qmei@umich.edu}
}

%useless

\maketitle
\sloppy
\begin{abstract}

It is well-known that many networks follow a  power-law degree distribution; however, the factors that influence the formation of their distributions are still unclear. 
%How to model the interplay between individual actions and the network distributions?
%How to explain the formation of group phenomena and their evolutionary patterns?
How can one model the connection between individual actions and network distributions?
How can one explain the formation of group phenomena and their evolutionary patterns?

In this paper, we propose a unified framework, \textbf{M$^3$D}, to model human dynamics in social networks from three perspectives: macro, meso, and micro. 
%At the micro-level, we try to capture how individual users make a decision to perform an action or not. 
At the micro-level, we seek to capture the way in which an individual user decides whether to perform an action.
%At the meso-level, we study how group behavior has been developed and evolves over time based on individual  actions.
At the meso-level, we study how group behavior develops and evolves over time, based on individual actions.
%users' actions spread in a social group to further develop a group phenomenon over time.
%At the macro-level, we try to understand how the network distributions such as power-law (or heavy-tailed phenomena) can be explained by the group behavioral patterns.
At the macro-level, we try to understand how network distributions such as power-law (or heavy-tailed phenomena) can be explained by group behavior. 
%The framework offers a new way to explain the connection between individual actions and the network distributions. 
We provide theoretical analysis for the proposed framework, and discuss the connection of our framework with existing work.

The framework offers a new, flexible way to explain the interplay between individual user actions and  network distributions, and can benefit many applications. 
%The framework is flexible and can benefit many applications. We apply \textbf{M$^3$D} to three different genres of networks: Tencent Weibo, Citation, and Flickr, to model heavy-tailed distributions from partially observed individual actions, and to predict the formation of group behaviors. 
To model heavy-tailed distributions from partially observed individual actions and to predict the formation of group behaviors, we apply \textbf{M$^3$D} to three different genres of networks: Tencent Weibo, Citation, and Flickr. 
%Moreover, we use information burst prediction as a particular application to quantitatively evaluate the predictive power of the proposed framework. 
We also use information-burst prediction as a particular application to quantitatively evaluate the predictive power of the proposed framework. 
%Our results on the Tenent Weibo dataset indicate that our framework improves the prediction performance by up to 6-30\% over that of several alternative methods.
Our results on the Weibo indicate that \model's prediction performance exceeds that of several alternative methods by up to 30\%.

\hide{
How to model the dynamic social networks is a critical research task. Generally, the goal of this study is to describe how an object evolves in a given social network. For instance, the diffusion process of a message propagated in the network, the growth of the relationships between users, and so on. More specifically, we are interested in answering the following three questions: (1) at micro-level, how each individual behaves at each time unit; (2) at meso-level, how groups of individuals' behaviors effect the object at each time unit; and (3) at macro-level, what properties the object finally have. Although numerous attempts have been made for this study, for instance, power-law models are used to describe the object's macro-level properties (e.g., degree distribution for relationship analysis) frequently, model the macro-level frequently, few of them can answer the above three questions in a unified framework. 

In this paper, we propose the M3D framework to describe the evolution process of a given object in a social network. More specifically, we use information diffusion as an example to describe how each individual behaves and effects the diffused information, how groups of users behave, and the diffusion phenomena in the whole social network. We evaluate the proposed framework in a real social networking data.
}

\end{abstract}

%\vspace{-0.051in}
% A category with only the three required fields
\category{H.3.3}{Information Search and Retrieval}{Text Mining}
\category{J.4}{Social Behavioral Sciences}{Miscellaneous}
%\category{H.4.m}{Information Systems}{Miscellaneous}

%\vspace{-0.05in}
\terms{Algorithms, Experimentation}

%\vspace{-0.05in}
\keywords{User behavior, Information diffusion, Heavy-tailed distributions}

\section{Introduction}
\label{sec:intro}

Statistics show that 1\% of Twitter users produce 50\% of its content~\cite{Wu:WWW2011} and control 25\% of its information diffusion~\cite{lou2013mining}, while 5\% of Wikipedia contributors generate 80\% of its content~\cite{Muchnik:SR14}. 
Such phenomena have received much attention, and several state-of-the-art models~\cite{barabasi1999emergence,clauset2009power,faloutsos1999power,newman2005power} have been proposed, to explain their underlying mechanism.
However most of these studies focus on modeling the totality of interactions between individuals, while ignoring the temporal aspects of individual actions~\cite{Song:Connect2012}. 
Yet the dynamics of social phenomena are, at a fundamental level, driven by individual user actions~\cite{vazquez2006modeling}, resulting in a clear and present need for understanding the connection between human dynamics and network distributions. 

The connection between human dynamics (e.g., e-mail communication between individuals) and network distributions has been studied in physics~\cite{Song:Connect2012,vazquez2006modeling}, economics~\cite{black1973pricing}, and sociology~\cite{Bernheim:94,Kelman:58}. 
V{\'a}zquez et al.~\cite{vazquez2006modeling} showed that the timing of individual user actions follows a non-Poisson distribution pattern, and the ``bursty'' nature of human behavior can be modeled based on the decisions of individual user.
Rybski et al.~\cite{Rybski:SR12} studied group behavior in social communities, and tried to understand the origin of clustering and long-term persistence.
Muchnik et al.~\cite{Muchnik:SR14} tried to understand how network distributions such as degree distribution (power law) arises from individual actions. They found that action and degree are not strongly correlated.
However, these studies do not provide explicit explanations for the connection between individual actions and network distributions.
Recently, Song et al.~\cite{Song:Connect2012} focused on studying communication patterns between users using mobile, e-mail, Twitter, instant message data. They discovered a series of interesting relationships that quantitatively connect human dynamics to several properties of the network.

\hide{
Such dramatic dominations by handful of activities have been referred to as the heavy-tailed phenomena and discovered to be ubiquitous in a variety of network systems~\cite{newman2005power,clauset2009power}. %Integral to the heavy-tailed phenomena in networks is the individual action over a timespan. 
%Since the inception of network science, the emergent heavy-tailed phenomena---such as power law and lognormal distributions---have been discovered to be ubiquitous in a variety of network systems~\cite{newman2005power,clauset2009power}. 
For example, the degree of nodes in the Internet Infrastructure, the World Wide Web (WWW), and the Facebook social network demonstrate a power law distribution~\cite{faloutsos1999power,barabasi1999emergence,Ugander:FB2014}. 
Or the number of indexed pages at a given site is lognormally distributed for every timestamp~\cite{huberman1999internet}. 
Or, beyond power law and lognormal distributions, the double Pareto lognormal distributions---also skewed and heavy-tailed---are revealed from mobile call graphs~\cite{Christos:KDD08}.
Accordingly, each network as a whole presents remarkable statistical regularities. 
}

\begin{figure*}[t]
\centering
\label{fig:introexp}
\epsfig{file=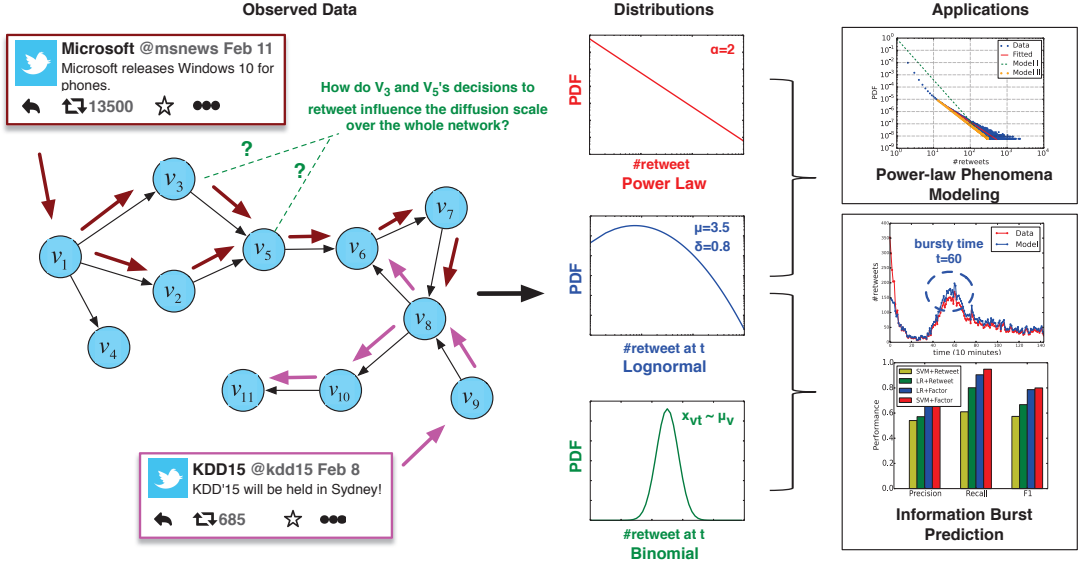, width=5.8in}
%\vspace{-0.1in}
\caption{Example of modeling individual retweeting behavior and the distribution of the number of retweets. The left figure shows an example in which two tweets are diffused in Twitter by retweeting. The middle figure shows the binomial distribution that determines individual user actions, the lognormal distribution that capture the retweet count for each tweet, and the power law distribution that models the distribution of the number of retweets for all tweets. The right figure shows examples of applications.\normalsize}
%\vspace{-0.07in}
\end{figure*}

%Yet while the specific form of heavy-tailed distributions in networks may vary, these network phenomena arise from an integral element---an action of individuals, which is intrinsically a random behavioral decision~\cite{Muchnik:SR14}. 
%For example, given a fresh tweet at a timestamp in Twitter, the individual user who is exposed to it makes a decision whether he or she will retweet it. 
%The integration of such individual retweet behaviors--both over all users and all timestamps---eventually exhibits the network phenomena at scale, i.e., heavy tailed distributions. 
%Similarly, in the course of scientific writings, the citations are collected by each publication if referred by others. 
%Given a long observation timeframe, the accumulation of citations of all scientific publications also form a power law distribution, with the vast majority of publications receiving few citations. 
\hide{
A significant amount of work has been devoted to modeling individual behaviors in social networks. 
Xu et al.~\cite{Xu:SIGIR2012} comprehensively analyzed the motivation to user online actions, including both posting and sharing content in Twitter. 
Hong et al.~\cite{Hong:WSDM2013} modeled user online preferences and predicted whether an individual user decision in social media. 
Furthermore, Jin et al.~\cite{Jin:KDD2014} proposed to use geometric Brownian motion to capture user adoption of protest in social communities.

%A significant amount of work has been devoted to explaining the emergence of these statistical laws from individual behaviors. 
%Rybski et al.~\cite{Rybski:SR12} found the long-term correlations in the user's activity in social communities and the emergent property of social systems. 
%Song et al.~\cite{Song:Connect2012} demonstrated that the degree and tie weight distributions in social networks can be expressed through characterizing human individual activities. 
%Furthermore, Muchnik et al.~\cite{Muchnik:SR14} indicated that the origins of power law degree distribution in a social network is similarly determined by skewed human activity. 

Notwithstanding the promising results of modeling both the emergent heavy-tailed phenomena and individual actions, we still have a relative dearth of knowledge concerning the bridge between them in social networks. 
Essentially, previous attempts are limited by ignoring the integrating process of user actions in both the collective and the longitudinal dimensions. 
Further complications arise due to the complexity of explaining different types of networks, which requires a general framework to unveil the common mechanisms underpinning various social systems. 
}

In this work, our goal is to develop a theoretical framework to model human
dynamics in social networks from three perspectives: macro, meso, and micro. At the micro-level, we try to capture how individual users make a decision to perform an action (e.g., to retweet a message on Twitter). At the meso-level, we study how individual  actions develop into 
group behavior (e.g., the diffusion of a message) and how group behavior evolves over time.
At the macro-level, we investigate how the network distributions such as power-law (or heavy-tailed phenomena) arise from group behavior.
% finally constitute we try to understand how the network distributions such as power-law (or heavy-tailed phenomena) can be explained by the group behavioral patterns.

%In light of these limitations and challenges, %we instead aim to study how heavy-tailed phenomena arise from individual actions.
%we aim to answer the question: ``why would random individual actions consistently emerge into the heavy-tailed phenomena in different networks?''
Figure~\ref{fig:introexp} illustrates the problem addressed in this paper.
The left figure shows the example in which two tweets are diffused in Twitter by retweeting. 
In the diffusion process, each individual makes a decision to retweet or not, according to a personalized binomial distribution. 
The number of retweets for each message has been modeled using a lognormal distribution, and the retweet counts for all messages follow the power-law distribution.
%The right figure shows several potential applications: modeling network phenomena using power law and information burst prediction. 
The right figure shows several potential applications, namely: modeling network phenomena using power-law and information-burst prediction. 
%The problem is non-trivial. Its fundamental challenge lies in the uncertainty of how the different sub-models are intrinsically connected and how they are developed. Additionally, it is also important to validate the effectiveness of such a modeling framework in real large networks.
As a non-trivial problem, the fundamental challenge lies in the uncertainty of how different sub-models are intrinsically connected and how they are developed. It is also important to validate the effectiveness of such a modeling framework in real, large networks. 

\hide{
Generally, we can observe that the number of retweets of tweets, such as $xx$ and $yy$, follow a power law distribution over an observed timespan. 
%How does this power law emerge from ``random'' individual retweeting behaviors? 
Specifically, an individual who is exposed to a tweet $xx$, such as $v_3$ or $v_5$, makes a decision to retweet it or not, which can be modeled as a Binomial test. 
The group behaviors---the collection of such ``random'' retweeting actions on this tweet $xx$---actually demonstrate a lognormal distribution. 
In observing so, the integration of such individual retweeting actions--over both all users and all tweets---eventually exhibits a ``regular'' distribution in Twitter network as a whole---the heavy-tailed phenomenon.  
}

%For example, given a fresh tweet at a timestamp in Twitter, the individual user who is exposed to it makes a decision whether he or she will retweet it. 
%The integration of such individual retweet behaviors--both over all users and all timestamps---eventually exhibits the network phenomena at scale, i.e., heavy tailed distributions. 
%Similarly, in the course of scientific writings, the citations are collected by each publication if referred by others. 
%Given a long observation timeframe, the accumulation of citations of all scientific publications also form a power law distribution, with the vast majority of publications receiving few citations. 

In this paper, we conclusively demonstrate the underlying mechanisms by which %``regular'' 
heavy-tailed phenomena develop from 
%``random'' 
individual user actions. 
We propose a unified framework, referred to as \model, to 1) model the statistical distributions of individual actions, group behaviors, and heavy-tailed phenomena, 
and to 2) unveil the emerging process of heavy-tailed phenomena from individual actions and group behavior. 
%We have several interesting results of the proposed framework.
%Theoretically, we obtain a theorem which suggests a lower bound of individual action-adoption probability for the existence of the power law distribution of the number of the action adopters.
This proposed framework produces several interesting results, both theoretical and empirical.
Theoretically, we obtain a theorem that suggests a lower bound on the individual action-adoption probability for the existence of the power-law distribution in the number of action adopters.
%For example, in a scenario where every individual user decides to retweet a post with the same probability $p$, considering the distribution over the total number of retweets each post receives, \model \ presents that $p > 1/m$ is a sufficient condition for power law holding, where $m$ is the number of individual users. 
Empirically, by leveraging our framework, we demonstrate that it is possible to achieve an accuracy of 90\% for predicting future information bursts.

%find a greater than 90\% potential predicting performance (in terms of recall) when inferring future information burst in social media. }

\hide{
\reminder{We theoretically prove that to hold power-law phenomena in networks, the probability that an individual will take an action (simulated in our Binomial test) has to be approximately greater than $\frac{1}{m}$, where $m$ is the number of users in networks. }
\textcolor{red}{By leveraging our framework, we find a greater than 90\% potential predictability for inferring future information burst in social media. } 
\textcolor{blue}{Our framework provides empirical evidence of interpreting the predictability of the bursty phenomena in information diffusion. }
}

The proposed framework is flexible and can benefit many applications. To model heavy-tailed distributions from partially observed individual actions, we apply \model~to three genres of networks: Tencent Weibo\footnote{http://t.qq.com, one of the largest microblogging services in China.}, Citation, and Flickr.
%Using three types of large-scale real social networks, including the social media (Weibo), ArnetMiner academic network (Citation), and the photo sharing network (Flickr), 
%We verify that our proposed framework can explicitly explain the emerging process of heavy-tailed phenomena from individual actions in networks. 
We verify that our proposed framework can explain the emerging process of heavy-tailed phenomena from individual actions in these networks. 
For example, our results on the Weibo network---with more than 320 million users and 4.6 billion tweets over four months---convincingly demonstrate that 1) the retweeting action of each individual aggregates to the lognormal distribution as suggested in our framework, and 2) the lognormal distribution at each timestamp is integral to the power law distribution.
% as our framework suggests. 
%Our conclusions on connecting individual actions and heavy-tailed phenomena in networks engender important implications for understanding the underlying mechanisms of social emergence. 
Our conclusions on the connection between individual actions and heavy-tailed phenomena in real-world networks give rise to important implications for understanding the underlying mechanisms of social emergence.

%We summarize our contributions in this paper as follows:

%\begin{itemize}
%\item We discover a mechanism which links the actions of individual users and a group of users. The mechanism is also able to explain the heavy-tailed phenomena at network scale. 

%\item We propose a framework to model the mechanism. Our framework is able to exhibit heavy-tailed distributions like lognormal and power law. We further offer a theoretical lower bound, on each individual user's probability to adopt an action, to hold power law phenomena at network scale. The lower bound is approximately $\frac{1}{m}$, where $m$ is the number of users in the network. 

%\item We conduct several experiments on three real social networking data to validate the proposed framework. We also demonstrate how to apply our framework on an information burst prediction problem. Experimental results suggest that to obtain a $+0.9$ performance (in the term of recall) on predicting information burst in future, we are required to observe the diffusion tracks within the first 3 hours.   
%\end{itemize}

\vpara{Organization.} The rest of the paper is organized as follows. In Section~\ref{sec:framework}, we describe the unified framework we propose to model user actions at three levels of granularity, and describe how the three levels connect with each other. In Section~\ref{sec:observation}, we introduce our experimental setup based on data from three real social network. In Section~\ref{sec:exp}, we present the experimental results to validate \model. In Section~\ref{sec:related}, we review relevant related work. Section~\ref{sec:conclusion} concludes the paper.

\hide{

\begin{quote}
%\vspace{-0.05in}
\textit{The world produces her heroes, while the heroes shape their world.}
%\vspace{-0.15in}
\flushright{--- A Chinese saying}
\end{quote}

%Social networking services are currently the most popular Web applications. Through social networks, users are allowed to connect with each other, exchange life experiences, and are  information. Understanding how users behave within a social network is a critical research problem~\cite{}. 

The mathematical framework that we want to study in this paper is best understood as a ``bridge'' to connect individual behaviors, group behaviors, and global network phenomena. Let us begin with an exemplificative scenario for further explanation. 
 
Suppose a piece of information (i.e., a post $d$) is propagating on Twitter, and let us consider two individuals: Anna and Bill. The first question we would like to ask is: at each time $t$, whether Anna and Bill will retweet the post to disseminate the information? Existing work solve this problem by extracting features of both the users and the information, learning a function, which projects the feature space into a discrete ``behavior'' space, and predicting the retweet behavior by the learned function~\cite{zhang2013social, peng2011retweet, feng2013retweet, yang2014rain}. We call these studies as micro level user behavior modeling, which aims to model the behavior of an individual. 

Next, we further consider a group of users consisting of Anna, Bill, and their friends (followers and followees on Twitter). How many users in that group will retweet the post $d$ at time $t$? Several deterministic models are investigated to answer this question. For instance, epidemiology models (e.g., SIR model~\cite{kermack1932contributions, capasso1978generalization} and its variants~\cite{clayton1993statistical, nunn2006infectious}) , which stratify the population into different states and characterize the dynamic transitions among the states by defining differential equations. We call these studies as meso level user behavior modeling, which aims to model the behaviors of a group of individuals. 

%How many retweets the post $d$ will receive in total within the whole social network? 
We finally study how the whole social network will be effected by the post $d$ at last. Conventional wisdom holds that heavy-tailed distributions are ubiquitous in an extremely wide range of phenomena. For example, the number of retweets a post will receive in total within the whole social network follows a power law distribution~\cite{}. The number of pages per site in the web also distributes according to a power law~\cite{huberman1999internet}.   

To summary, given a social network, the micro level studies the behavior decision of each individual. The meso level studies how the individual actions affecting the group behavior. And the macro level studies the network phenomena triggered by the user behavior. However, to the best of our knowledge, all these studies are conducted independently. In this paper, we aim to propose a uniform framework, which bridges the individual and group behavior, and leads to an explanation of the network phenomena. 

\reminder{Do we want to introduce NIPS experiments in introduction?}

\vpara{Model.} 
Suppose there is a social network $H$ that we can not observe entirely. We are given a subnetwork $G \subset H$. Without loss of generality, $G$ can be defined as a directed network $G=<\mathcal{V}, E, \mathbf{\mathcal{A}}>$ where $V$ is the set of users, $E$ is the set of edges, and $\mathbf{\mathcal{A}}$ is a feature matrix whose entry $a_{vi}$ represents user $v$'s $i$-th feature (e.g., age, gender, etc.). 

We aim to study the user behavior $z$ (e.g., retweet a post in Twitter, make a call in the mobile network, create a relationship in Facebook, etc.). We first represent the actions each individual in $V$ takes as a matrix $\mathbf{X}$ whose binary valued entry $x_{tv}$ represents that at time $t$, user $v$ takes action $z$ or not. Furthermore, in order to develop an evolutionary theory of the growth of the behavior adoption in the whole social network, we define $n_z$ as the number of individuals in $H$ who takes action $z$. And we consider a general dynamical system governing the evolution of $n_z$ over time in $H$:

\beq{
\label{eq:core}
\frac{dn_z}{dt}=f(\mathbf{X_t})
}

\noindent where $f: \mathbf{X} \rightarrow \mathcal{R}$ is a function representing who individual behaviors in $G$ effects the group behavior adoption in $H$. 

%\noindent where $\mathbf{X}_t$ is a matrix consisting of binary variables and indicating the users' willing to behave as $z$. Specifically, $x_{vt}=1$ denotes the user $v$ is willing to behave as $z$ time $t$, and $x_{vt}=-1$ denotes $v$ is unwilling to behave as $z$. 

According to the dynamic system above, there are three open problems: (1) According to the given subnetwork $G$, how to generate $\mathbf{X}$ in practice? (2) How to define $f(\cdot)$ to characterize the connections between each individual in $G$ and a group of individuals in $H$?  (3) How does $n_z$ converge in finite or infinite time?  %We also give a theoretical result under a certain definition of $f(\cdot)$ and verify our result in real datasets.  

%Thus $n_z$ is being pushed in a positive or negative direction based on the willing of all users at time $t$. 

%The framework assigns each user a in the given social network At micro level, the framework assigns a random variable $x_{vt}$ to denote the behavior of user $v$ acting at time $t$. At meso level, we are interested in $n_d(t)$, which indicates the number of retweets the post $d$ receives at the $t$-th hour since $d$ first posted. 

\vpara{Results.} In this paper, we resolve these open problems by a uniform framework. Our framework first assigns each individual in $G$ with a Binomial model to generate $x_{\cdot v}$. We demonstrate that the parameters of these Binomial models can be estimated by general machine learning methodologies. The framework then gives a sample definition of $f(\cdot)$, which is then applied into two real social networking data. At last, with the certain definition of $f(\cdot)$, our framework exhibits heavy-tailed distributions of $n_z$. Specifically, when $t$ is large enough, $n_z(t)$ follows a lognormal distribution, whose parameters can be expressed by the Binomial models of each user. We also proof that $n_z$ is power law distributed in infinite time. %, while the power law's parameters can be expressed by the lognormal parameters. 

\reminder{Experimental results}

We then evaluate our framework on empirical data. The experimental results suggest that xxxxx

%Our proposed framework can be proved to exhibit heavy-tailed distributions both theoretically and practically. For instance, in theoretical, when modeling information diffusion on social medias, we proof that, with a certain definition of $f(\cdot)$ and $t$ is large enough, $n_{\text{retweet}}(t)$ follows a lognormal distribution. See details in Section~\ref{sec:framework}.  

Our major contribution is giving qualitatively explanations of how individual behaviors affect group behaviors qualitatively, and how the group behaviors converge to heavy-tailed distributions.  

}%end of hide

\section{Model Framework}
\label{sec:framework}

In this section, we propose a unified framework \model \ to model the interplay between individual actions at the micro-level, group behaviors at the meso-level, and network distribution at the macro-level. 

\subsection{Formulation}
\label{sec:definition}
Let $G=(V, E)$ denote an \textit{observed} network that is a subnetwork of the \textit{complete} network $H=(V^H, E^H)$, as $H$ itself is too large to be observed entirely in practice. $V$ is the set of users and $E\subset V\times V$ is the set of edges between users. 
Our goal is to study how individual users' actions in $G$ emerge as the macroscopic phenomena in the complete network $H$. 
%To achieve this objective, 
To begin with, we first give definitions of several concepts over a social network:
%formally define and categorize user actions into three granularities, including 
individual actions, group behaviors, and network distribution. 

\begin{definition}
\label{def:individual_action}
\textbf{Individual action.} 
An action $z$ performed by user $v$ at time $t$ is represented as $x_{tvz} \in \{-1, 1\}$. When we observe user $v$'s action $z$ at time $t$, we denote $x_{tvz}=1$; otherwise $x_{tvz}=-1$.
%Given an action $z$\footnote{We omit the subscript $i$ to simplify notations.}, such as retweeting a tweet, citing a paper, and commenting an online photo, and a user $v$ in the observable network $G$ at a timestamp $t$, an individual action of user $v$ is defined as $x_{tvz} \in \{-1, 1\}$. When user $v$ takes action $z$ at time $t$, we let $x_{tvz}=1$. Otherwise we let $x_{tvz}=-1$.
\end{definition}

The action can be defined differently in different social networks. For example, in Twitter, we can define the action as retweeting a tweet; in a scientific network, we define the action as citing a specific paper; and in Flickr, we define the action as posting comments to a specific photo.
%We define an action---such as retweet a tweet $i$ in social media, cite a paper $i$ in scientific writing, and comment to an online photo $i$---as a discrete variable $z_i$. 
%Specifically, acting on different objects (e.g., different tweets) in a network is denoted as different actions. 
%In doing so, $\{z_i\}$ represents a set of actions in a network $G$. 
%We define an individual action at the micro-level as follows. 
%
%\reminder{need to explicitly explain the distinction between an action and an individual action} 
%For example, in social media, $z$ denotes retweeting a particular tweet, and $x_{tvz} = 1$ can represent that user $v$ retweets a tweet at timestamp $t$.
By accumulating individual actions, we can observe group behavior at the meso-level. Formally, we have the following definition.
%We further define a group behavior at the meso-level to represent the collection of individual actions in network $H$. 

\begin{definition}
\label{def:collective}
\textbf{Group behavior.} 
For a given action $z$, 
we denote the number of users (adopters) who adopted the action at a specific time $t$ in the complete network $H$ as $n_z(t)$.
%and a timestamp $t$, a collective action of $z$, denoted as $n_z(t)$, is defined as the number of users who adopt the action $z$ at the timestamp $t$ in the complete social network $H$. $n_z(t) \in [0, |V^H|]$, where $|V^{H}|$ is the number of users in the complete network $H$. 
\end{definition}

%\noindent where $n_z(t) \in [0, |V^H|]$ and $|V^{H}|$ is the number of users in the complete network $H$. 

It is worth noting that the group behavior is defined over the complete network $H$
instead of the observed network $G$. Also we use $N_z(t)$ to denote the cumulative number of adopters up to time $t$, and $N_z$ to denote  the total number of adopters until time $T\ge t$ ($T$ is larger than any observed time $t$).
Finally, at the macro-level, we consider all actions $\{z\}$ in the social network. 

\begin{definition}
\label{def:network}
\textbf{Network distribution.} 
Given a set of actions $\{z\}$, and the corresponding set of numbers $\{N_z\}$ for all the actions, $P(N_z)$ represents the network distribution of the actions.
% observation timespan $T$. 
\end{definition}

Regarding the network distribution, heavy-tailed distributions have been demonstrated to be ubiquitous in social networks~\cite{newman2005power,clauset2009power,faloutsos1999power,barabasi1999emergence}. 
In this work, we focus on modeling the \textit{heavy-tailed phenomena} as a macro-level reflection of individual actions in social networks. 
\hide{
At the macro-level, we finally study the phenomena that the network $H$ as a whole presents. 
Heavy-tailed distributions have been demonstrated to be ubiquitous in social networks~\cite{newman2005power,clauset2009power,faloutsos1999power,barabasi1999emergence}. 
Herein, we aim to model the heavy-tailed phenomena as a macro-level reflection of individual actions in social networks. 
%Macroscopically, we study how the adoptions, each action $z$ receives, distribute over the complete social network $H$. Heavy-tailed distributions have been used to model the phenomena governed by large movements. Thus we call the macro-level user actions as \textit{heavy-tailed phenomena}. 
%\reminder{YY: not sure if the macro-level is claimed clearly enough.}
}%
Finally, as a conclusion, an individual action is assigned to each user in the observed network $G$ and represents the state of the user (as having adopted the action or not). A group behavior is assigned to a group of users and represented as the \textit{number} of action adopters within the group. A heavy-tailed phenomenon is represented as a \textit{distribution} to describe the popularity of each action over the complete network $H$.

%We also define network action to represent the phenomenon that the network as a whole presents when acting on a specific action, that is:

%\begin{definition}
%\label{def:network}
%\textit{Heavy-tailed phenomena.} 
%Given an action $z$, a network action $N_z$ is defined as the number of users who adopt the action $z$ in the network $H$ over the observation timespan $T$. 
%\end{definition}

\vpara{Goal.} 
Given an observed network $G$ that is a subnetwork of the complete network $H$, how to unveil the mechanisms by which the heavy-tailed phenomena in network $H$ emerge from individual actions $x_{tvz}$ in network $G$? 
We propose a unified framework to model 1) individual actions, group behaviors, and heavy-tailed phenomena together;
and 2) the emerging process by which individual actions $x_{tvz}$ of users are revealed to be integral to the heavy-tailed phenomena in network $H$ as a whole.

%Our first goal is to find three statistical models to best fit user actions in three granularities, including individual action, collective action, and network action. 
%Our second goal is to model the physical process that individual actions $x_{tvz}$ of users are integral to the heavy-tailed phenomena in the network $H$ as a whole. 

%\section{Framework}
%\label{sec:framework}

%Our goal is to design a unified framework to model the statistical laws of user actions from different levels and capture the emerging process of network phenomena from individual actions.  
%We propose a framework. 
%In this framework, \textcolor{blue}{flow of our framework}

\subsection{Modeling Individual Actions and Group Behavior}

We first introduce independent models to 
%an individual model and a group model to 
model individual actions and group behavior. 
We then demonstrate how group behavior arises from individual actions. 

For each user $v$ in the observed network $G$, and a given action $z$, it is natural to assume individual actions $x_{tvz}$ at different timestamps $t$ are independent and identically distributed (i.i.d.).
% random variables. 
Given this, we formally define the individual model of user $v$ as follows:

\begin{definition}
\label{def:individual_model}
\textbf{Individual Model}. In the observed network $G$, each user $v$ is assigned a binomial parameter $\mu_{vz}$ for each action $z$, such that for any time $t$, we have $x_{tvz} \sim Bernoulli(\mu_{vz})$ with $P(x_{tvz}=1)=\mu_{vz}$, $P(x_{tvz}=-1)=1-\mu_{vz}$.  
\end{definition} 

We then define the following group model to connect individual actions and group behavior.

\begin{definition}
\textbf{Group Model.} At time $t$, let $y^{+}_{t}$ be the number of individual actions with $x_{tvz}=1$ and $y^{-}_{t}$ be the number of individual actions with $x_{tvz}=-1$.  The group model is then a dynamic system governing the evolution of group behavior over time:
\beq{
\label{eq:meso}
n_z(t)=n_z(t-1)U_z^{y^{+}_{t}}D_z^{-y^{-}_{t}},
}
%\vspace{-0.1in}

\noindent where $U_z$ is an ``upward factor'', which describes how an individual user in $G$ adopting $z$ will influence others in $H$; $D_z$ is a ``downward factor'', which describes how an individual user who did not adopt $z$ influences the others not to adopt the action. 
%\noindent where $U$ is the ``upward factor'' and $D$ is the ``downward factor''. They describe how each individual action affects the dynamic process of the collective behavior. 
\end{definition}

Additionally, $n_z(t_0)=1$, where $t_0$ is the timestamp when the first user adopts $z$. Without further explanation, we refer to $t_0$ as timestamp $0$, for the sake of simplicity. 
Please notice that both the individual model and the group model are defined based on the action $z$. We omit the subscript $z$ in $\mu_{vz}$, $U_z$ and $D_z$ in the following descriptions to keep our notations simple. 

\hide{
\begin{figure*}[t]
\centering
\label{fig:framework}
\epsfig{file=figs/framework.pdf, width=6in}
%\vspace{-0.1in}
\caption{An example of our proposed framework by letting $U=D=e$. The framework exhibits lognormal distribution in group behavior modeling and power-law distribution in network phenomena modeling. }
%\vspace{-0.07in}
\end{figure*}
}

\vpara{Behavior of \model: a lognormal arises.} 
%Eq.~\ref{eq:meso} only gives a dynamic equation. 
Let us now examine 
%the behavior of 
our framework to see how group behavior distributes under a certain configuration of a group model's parameters. 

\begin{theorem}
With the definition of $U=D=e$, when $t$ is large enough, $n_z(t)$ converges to a lognormal random variable with a mean $t\sum_v \mu_v$ and a variance $t\sum_{v}\mu_v(1-\mu_v)$. 
\end{theorem}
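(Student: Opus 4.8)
The plan is to reduce the statement to a central limit theorem by passing to logarithms. Setting $U=D=e$ in the group model turns the multiplicative recursion $n_z(t)=n_z(t-1)U^{y^{+}_{t}}D^{-y^{-}_{t}}$ into the additive one $\log n_z(t)=\log n_z(t-1)+\big(y^{+}_{t}-y^{-}_{t}\big)$. Using $n_z(0)=1$, i.e.\ $\log n_z(0)=0$, and unrolling the recursion gives $\log n_z(t)=\sum_{s=1}^{t}W_s$, where $W_s$ depends only on the actions $\{x_{svz}\}_{v\in V}$ performed at time $s$. Hence $n_z(t)=\exp\!\big(\sum_{s=1}^{t}W_s\big)$, and the theorem is equivalent to the claim that the partial sum $\sum_{s\le t}W_s$ is, for large $t$, approximately normal with mean $t\sum_v\mu_v$ and variance $t\sum_v\mu_v(1-\mu_v)$; exponentiating a normal then yields the asserted lognormal limit, under the convention that the parameters of the lognormal are the mean and variance of the underlying normal.

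Second, I would record the law of the increments. By the individual model (Definition~\ref{def:individual_model}) the actions $\{x_{svz}\}$ are independent across users and, crucially, i.i.d.\ across time $s$, so the increments $W_1,W_2,\dots$ are i.i.d., and within a single timestamp $W_s$ is a sum of $|V|$ independent, uniformly bounded contributions. A direct moment computation from $P(x_{svz}=1)=\mu_v$ then fixes the common mean and variance of $W_s$; matching these to $\sum_v\mu_v$ and $\sum_v\mu_v(1-\mu_v)$ forces the effective increment to be the count $y^{+}_{s}$ of users who act positively at time $s$. Finiteness of all moments is automatic because $|W_s|\le|V|$.

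Third, the conclusion follows from the Lindeberg--L\'evy central limit theorem applied to the i.i.d.\ sequence $(W_s)$: the normalized sum $\big(\log n_z(t)-t\,\mathbb{E}[W_1]\big)/\sqrt{t\,\mathrm{Var}(W_1)}$ converges in distribution to $\mathcal{N}(0,1)$, so for large $t$ the law of $\log n_z(t)$ is approximated by $\mathcal{N}\!\big(t\sum_v\mu_v,\ t\sum_v\mu_v(1-\mu_v)\big)$ and that of $n_z(t)$ by the corresponding lognormal. If one also wants the number of users to grow together with $t$, the plain i.i.d.\ CLT is replaced by a Lindeberg/Lyapunov argument for a triangular array --- still routine, since the summands are bounded by $1$ --- and the writeup should state which regime is intended.

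The main obstacle is not the CLT machinery, which is standard once the log-linearization is done, but two interpretive points that the writeup must pin down. (i) Identifying the per-step increment so that its first two moments are $\sum_v\mu_v$ and $\sum_v\mu_v(1-\mu_v)$: under the recursion taken literally with $U=D=e$ the increment is $y^{+}_{t}-y^{-}_{t}$, whose mean under the $\{-1,1\}$ encoding is $\sum_v(2\mu_v-1)$, so the statement implicitly uses the upward factor alone (equivalently $D=1$), or an equivalent rescaling, and this should be made explicit. (ii) The precise meaning of ``converges to a lognormal random variable'': since $\log n_z(t)$ has mean and variance both growing linearly in $t$, $n_z(t)$ does not converge in distribution on the nose, and the statement should be read as the Gaussian approximation $\big(\log n_z(t)-t\sum_v\mu_v\big)/\sqrt{t\sum_v\mu_v(1-\mu_v)}\Rightarrow\mathcal{N}(0,1)$. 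Making these two points precise is essentially all the real work; everything else is a one-line appeal to the classical CLT.
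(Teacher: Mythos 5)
Your proposal is correct and follows essentially the same route as the paper's own proof: set $U=D=e$, unroll the multiplicative recursion into $\ln n_z(t)=\sum_{s\le t}\bigl(y_s^{+}-y_s^{-}\bigr)$, invoke the central limit theorem for the sums of i.i.d.\ individual actions, and exponentiate to read off the lognormal parameters. The two caveats you flag are genuine features of the paper's argument as well: its proof takes the moments of $x_{tvz}$ to be $\mu_v$ and $\mu_v(1-\mu_v)$, which presumes a $\{0,1\}$ encoding (equivalently, counting only the upward contributions, $D=1$) rather than the stated $\{-1,1\}$ one, and it likewise reads ``converges to a lognormal'' only in the loose sense of the Gaussian approximation to $\ln n_z(t)$ with $t$-dependent mean and variance.
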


\begin{proof}

We first consider another representation of $y_t$ as 

\beq{
\label{eq:another_y}
y_{t} = y^+_t - y^-_t.
}

The total number of users $m$ can be denoted as $m=y_t^++y_t^-$. Together with  Eq.~\ref{eq:another_y}, we have

\beq{
\label{eq:hd}
y^+_t=\frac{1}{2}(m+y_t), ~~~~~
y^-_t=\frac{1}{2}(m-y_t).
}

With the definition $U=D=e$, according to Eq.~\ref{eq:meso}, we have

\beq{
\besp{
n_z(t) &= n_z(t-1)\exp(\frac{m+y_t}{2})\exp(\frac{-m+y_t}{2})\\
&= n_z(0) \exp(\sum_t y_t).
}
}

This is equivalent to proof that when $t$ is large enough, $\sum_t y_t \rightarrow N(-t \times \sum_v \mu_v, -t \times \sum_v (\mu_v)(1-\mu_v))$. 
We use $y'_{v}=\sum_{t'=1}^{t} x_{vt'}$ to denote the total effects made by user $v$. As $\{x_{vt'}\}$ are i.i.d, when $t$ is large enough, according to the central limit theorem, we have 

\beq{
\label{eq:yv}
\sqrt{t}[\frac{1}{t}\sum_{t'=1}^{t}x_{vt'}-\mu_v] \sim N(0, \mu_v(1-\mu_v)).
}

Thus we obtain

\beq{
\label{eq:yv_2}
y'_v=\sum_{t'=1}^{t} x_{vt'} \xrightarrow{d} N(t\mu_v, t\mu_v(1-\mu_v))
}

and

\beq{
\label{eq:yt}
\sum_t y_t=\sum_{v=1}y'_v \xrightarrow{d} N(t \sum_v \mu_v, t \sum_v \mu_v(1-\mu_v)).
}

Therefore, $n_z(t) \propto \exp(y_t)$ converges to the lognormal distribution with mean $t \sum_v \mu_v$ and variance $t \sum_v \mu_v(1-\mu_v)$, i.e., 

\beq{
\label{eq:pn}
P(n_z(t))=\frac{1}{n_z(t)\delta \sqrt{2\pi t}}\exp(\frac{-(ln{n_z(t)-\tau t})^2}{2\delta^2 t}),
} 

\noindent where 

\beq{
\label{eq:micro2meso}
\tau =\sum_v \mu_v,~~~~~
\delta^2 =\sum_v\mu_v(1-\mu_v).
}
\end{proof}

We refer to $\tau$ and $\delta$ as the parameters of the group model. Together with time $t$, they express the lognormal distribution that arises for $n_z$. 
More generally, approximate lognormal distributions can be obtained when $U=D=e$ does not holding. Specifically, consider 

\beq{
\label{eq:micro2meso_2}
\ln n_z(t) = \ln n_z(0) + \sum_{t'=1}^{t} \ln U^{y_{t'}^+} + \sum_{t'=1}^{t} \ln D^{y_{t'}^-}.
}

According to the Central Limit Theorem, $\sum_{t'=1}^{t} \ln U^{y_{t'}^+}$ and $\sum_{t'=1}^{t} \ln D^{y_{t'}^-}$ converge to a normal distribution. As the product of lognormal distributions is again lognormal, for sufficiently large $t$, $n_z(t)$ will asymptotically approach a lognormal distribution. 

%\hide{
%More generally, approximately lognormal distributions can be obtained when $U=D$, instead of the condition that $U=D=e$. Specifically, consider 
%\beqn{
%\label{eq:micro2meso_2}
%\besp{
%\ln n_z(t) &= \ln n_z(0) + \sum_{t'=1}^{t} \ln U^{y_{t'}^+} + \sum_{t'=1}^{t} \ln D^{y_{t'}^-}\\
%&= \sum_{t'=1}^{t} ln U^{y_{t'}^++y_{t'}^-}
%}
%}
%According to the Central Limit Theorem, $\sum_{t'=1}^{t} ln U^{y_{t'}^++y_{t'}^-}$ converges to a normal distribution. As the product of lognormal distributions is again lognormal, for sufficiently large $t$, $n_z(t)$ will asymptotically approach a lognormal distribution. 
%}

\vpara{Significance.} 
%\hide{For a given user action $z$, we define a model to describe the generation of individual actions. We also define a dynamic system governing the evolution of group behaviors over time. This system also connects individual actions and group behaviors. }
%We give an explanation of the distributions of group behaviors. Specifically, lognormal distributions for $n_z(t)$ arise when $t$ is sufficiently large. Moreover,  with a certain configuration of parameters (i.e., when $U=D=e$), the parameters of lognormal distributions can be represented by the individual models, i.e., the lognormal at time $t$ is with the mean as $t\sum_v \mu_v$ and the variance as $t\sum_{v}\mu_v(1-\mu_v)$. 
We conclusively demonstrate that the group behavior of the network---the collection of random (binomial) individual actions---follows a lognormal distribution. 
Specifically, the parameters of the lognormal distribution can be represented by our individual models, i.e., the lognormal distribution at time $t$ is parameterized with the mean as $t\sum_v \mu_v$ and the variance as $t\sum_{v}\mu_v(1-\mu_v)$, where $\mu_v$ is the binomial parameter from the individual model. 

\subsection{Modeling Heavy-Tailed Phenomena}

We study how the integration of user behavior over time eventually exhibits the heavy-tailed phenomena in the complete network. %We design a network model to capture the heavy-tailed network phenomena and demonstrate the formation of the macroscopic phenomena.  

For each action $z$, we define $N_z$ as the total number of adopters in the complete social network $H$. Formally, let $T$ denote the observation time window; we have $N_z=\int_{t=0}^{T}n_z(t)$. We then study the distribution of $N_z$. 

As we concluded above, $n_z(t)$ converges to a lognormal variable when $t$ is sufficiently large, i.e., $\ln P(n_z(t)) \sim N(\phi,  \varphi)$, where $\phi$ and $\varphi$ denote the mean and the variance, respectively. We assume $U=D=e$ holds. Therefore, $\phi=\tau t$ and $\varphi=\delta^2 t$ ($\tau=\sum_v \mu_v$, $\delta^2=\sum_v\mu_v(1-\mu_v)$). Further assuming the observation time window $T$ is weighted exponentially with parameter $\lambda$, we have the following theorem. 

%Please notice that the distribution of $n_z(t)$ depends on the time that has elapsed since the adoption starts to diffuse. Therefore, we define the distribution of $N_z$ as mixture of $P(n_z(t))$ given by Eq.~\ref{eq:pn}. Assuming the observation time window $T$ is weighted exponentially with parameter $\lambda$, $\ln P(n_z(t)) \sim N(\mu t,  \delta^2 t)$, we have

%Please notice that the distribution of the popularity per message, for all messages share the same meso level parameters, depends on the time that has elapsed since the message starts to diffuse. Therefore, we define the distribution of each message's popularity as a mixture of lognormals given by Eq.~\ref{eq:pn}. Assuming the observation time $T$ is weighted exponentially with parameter $\lambda$, $\ln P(n_z(t)) \sim N(\mu t,  \delta^2 t)$, we have

\begin{theorem}
\label{theorem:powerlaw}
%The PDF of $N_z$ is 
$P(N_z)=C N_z^{\alpha}$, where $C=\frac{\lambda}{\delta\sqrt{(\frac{\tau}{\delta})^2+2\lambda}}$ and $\alpha=-1+\frac{\tau}{\delta^2}-\frac{\sqrt{(\tau^2+2\lambda \delta^2)}}{\delta^2}$. 
\end{theorem}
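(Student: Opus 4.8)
The plan is to express $P(N_z)$ as a continuous mixture of lognormals over the random observation window $T$ and then evaluate the resulting integral in closed form; this is the classical ``lognormal with linearly growing log-variance, stopped at an exponential time, is a power law'' mechanism. Since $T\sim\mathrm{Exp}(\lambda)$ and, conditioned on $T=t$, the variable $N_z$ inherits (up to a constant absorbed in normalization) the lognormal law $\ln n_z(t)\sim N(\tau t,\delta^2 t)$ established in the previous theorem, I would start from
\[
P(N_z=x)=\int_0^{\infty}\frac{1}{x\,\delta\sqrt{2\pi t}}\exp\!\Big(-\frac{(\ln x-\tau t)^2}{2\delta^2 t}\Big)\,\lambda e^{-\lambda t}\,dt .
\]
Writing $u=\ln x$ and expanding $(u-\tau t)^2/(2\delta^2 t)=u^2/(2\delta^2 t)-u\tau/\delta^2+\tau^2 t/(2\delta^2)$, the term $e^{u\tau/\delta^2}=x^{\tau/\delta^2}$ is independent of $t$ and factors out, and the remaining integral is $\int_0^{\infty}t^{-1/2}\exp(-a/t-bt)\,dt$ with $a=u^2/(2\delta^2)$ and $b=(\tau^2+2\lambda\delta^2)/(2\delta^2)$.

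The core computation is then the identity $\int_0^{\infty}t^{-1/2}e^{-a/t-bt}\,dt=\sqrt{\pi/b}\,e^{-2\sqrt{ab}}$, obtained either from the integral representation of the modified Bessel function $K_{1/2}$ together with $K_{1/2}(z)=\sqrt{\pi/(2z)}\,e^{-z}$, or directly via the substitution $t\mapsto\sqrt{a/b}\,s$ and a symmetrization of the resulting integrand. For $x\ge 1$ one has $u\ge 0$, so $2\sqrt{ab}=u\sqrt{\tau^2+2\lambda\delta^2}/\delta^2$ and hence $e^{-2\sqrt{ab}}=x^{-\sqrt{\tau^2+2\lambda\delta^2}/\delta^2}$. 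Combining the prefactor $1/x$, the extracted $x^{\tau/\delta^2}$, and this last factor gives exactly the exponent $\alpha=-1+\tau/\delta^2-\sqrt{\tau^2+2\lambda\delta^2}/\delta^2$, while collecting the constants $\lambda/(\delta\sqrt{2\pi})$ with $\sqrt{\pi/b}=\delta\sqrt{2\pi/(\tau^2+2\lambda\delta^2)}$ yields $C=\lambda/\sqrt{\tau^2+2\lambda\delta^2}=\lambda/(\delta\sqrt{(\tau/\delta)^2+2\lambda})$, matching the claim.

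I expect the real obstacle to lie not in the integral but in justifying the first display — that $N_z=\int_0^T n_z(t)\,dt$ may be replaced, distributionally, by $n_z(T)$ up to a constant. The cleanest remedy is to note that since $\ln n_z(t)$ has mean and variance both linear in $t$, the integrand on $[0,T]$ is overwhelmingly concentrated near $t=T$, so $\int_0^T n_z(t)\,dt=\Theta(n_z(T))$ with the implied factor producing only a lower-order additive shift in $u$ that is absorbed by the exponential mixing; alternatively one argues directly that such a time-integral of lognormals is again asymptotically lognormal with the same leading parameters $(\tau t,\delta^2 t)$, leaving the mixture computation intact. A secondary point to state carefully is the domain: the power-law form holds for $x\ge 1$ (where $u\ge 0$ so $|u|=u$), and there $\alpha<-1$ because $\lambda>0$, so $P(N_z)$ is a genuine normalizable heavy tail.
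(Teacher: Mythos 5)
Your proposal follows essentially the same route as the paper: both write $P(N_z)$ as an exponentially weighted mixture of the lognormal laws of $n_z(t)$ and evaluate the resulting integral in closed form, your identity $\int_0^{\infty}t^{-1/2}e^{-a/t-bt}\,dt=\sqrt{\pi/b}\,e^{-2\sqrt{ab}}$ being exactly the computation the paper performs through its chain of substitutions ending in $\Gamma(\tfrac{1}{2})$. Your derivation is correct and in fact cleaner --- your final $C$ and $\alpha$ match the theorem statement (the paper's own last display contains sign/square-root typos), and you rightly flag the silent identification of $N_z=\int_0^{T}n_z(t)\,dt$ with $n_z(T)$ that the paper leaves implicit.
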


\begin{proof}
For action $z$, $P(N_z)$ is equivalent to the mixture of group models whose observation time parameter $T$ is weighted exponentially. Formally, we have

\beq{
\label{eq:pnd}
\small
\besp{
P(N_z)=\int_0^{\infty} \lambda \exp(\lambda t) \frac{1}{N_z\delta \sqrt{2\pi t}} \exp(\frac{-(ln{N_z-\tau t})^2}{2\delta^2 t})dt.
 %&=\frac{\lambda}{\delta\sqrt{(\frac{\tau}{\delta})^2+2\lambda}}n_d^{-1+\frac{\tau}{\delta^2}-\frac{\sqrt{(\tau^2+2\lambda \delta^2)}}{\delta^2}}
}
\normalsize
}

Let $a=\lambda - \frac{\tau^2}{2\delta^2}$, $b=-\frac{(\ln n_d)}{2\delta^2}$, and $c=\frac{\lambda}{n_d\sqrt{2\pi \delta^2}\exp(\frac{\tau\ln n_d}{\delta^2})}$. It is obvious that $a, b \leq 0$. Then we have 

\beq{
\label{eq:pnd_2}
\besp{
Eq.~\ref{eq:pnd}&= c\int_0^{\infty} \frac{1}{\sqrt{t}}\exp(at+\frac{b}{t})dt \\
&= 2c \int_0^{\infty} \exp(at^2+\frac{b}{t^2})dt^2.
}
}

Let $s=t^2\sqrt{\frac{b}{a}\cdot \frac{1}{t^2}}$, we have

\beq{
\label{eq:pnd_3}
\besp{
Eq.~\ref{eq:pnd_2}&=c\exp(2\sqrt{ab})\int_{2(\frac{a}{b})^{\frac{1}{4}}}^{\infty} \frac{\exp(as^2)}{\sqrt{s^2-4\sqrt{\frac{b}{a}}}}ds^2.
}
}

Let $u=s^2-4\frac{b}{a}$, we have

\beq{
\small
\besp{
\label{eq:pnd_4}
Eq.~\ref{eq:pnd_3}&=c\exp(2\sqrt{ab})\int_0^{\infty}\frac{\exp(au-4\sqrt{ab})}{\sqrt{u}}du\\
%&=c\exp(-2\sqrt{ab})\int_0^{\infty}\frac{\exp(au)}{\sqrt{u}}du\\
&=\frac{c\exp(-2\sqrt{ab})}{\sqrt{-a}}\Gamma(\frac{1}{2})\\
%&=\sqrt{\pi}\frac{c\exp(-2\sqrt{ab})}{\sqrt{-a}}\\
&=\frac{\lambda}{N_z\sqrt{2\delta^2}\sqrt{\frac{\tau^2}{2\delta^2}-\lambda}}\exp(\frac{\tau}{\delta^2}\ln N_z-2\frac{\ln N_z}{\delta}\sqrt{\frac{1}{2}(\frac{\tau^2}{2\delta^2}-\lambda)})\\
&=\frac{\lambda}{\sqrt{(\tau^2-2\delta^2\lambda)}}N_z^{-1+\frac{\tau}{\delta^2}-\frac{\tau^2-2\lambda\delta^2}{\delta^2}}
}\normalsize
}

Hence, we have $P(N_z)=C N_z^{\alpha}$, where

% the distribution of the number of users who adopt action $z$
%, with corresponding group model parameters as $\tau$ and $\delta$, 
%is a power-law distributed with parameters 

\beqn{
\label{eq:meso2macro}
\besp{
C &=\frac{\lambda}{\sqrt{(\tau^2-2\delta^2\lambda)}}, \\
\alpha &=-1+\frac{\tau}{\delta^2}-\frac{\tau^2-2\lambda\delta^2}{\delta^2}.
}
} 
\end{proof}

\hide{
\begin{proof}
See details in Appendix. 
\end{proof}
}

For $\alpha < 0$, we say $N_z$ is power-law distributed. 
\hide{
For an action with unknown collective model parameters, the probability of $N_z$ is given by:

\beq{
\label{eq:pn_hidden}
P(N_z)=\sum_{i} P(N_z|\tau_i, \delta_i)P(\tau_i, \delta_i)
}

For instance, under the assumption that each group model's parameters $\Theta_{\text{meso}}=\{(\tau_i, \delta_i)\}$ distributes uniformly, Eq.~\ref{eq:pn_hidden} can be rewritten as

\beq{
\label{eq:pn_all}
P(N_z)=\frac{1}{|\Theta_{\text{group}}|}\sum_z \frac{C_i}{N_z^{\alpha_z}},
} 

\noindent which behaves like a power-law with an exponent given by the smallest power $\alpha_z$.  This leads to the general result that the evolutionary of $N_z$ distributed according to a power-law.
}
It is worth noting that, when group behavior follows a lognormal distribution, 
even without the conditions of $\phi=\tau t $ or $\varphi = \delta^2 t$, 
 %are not satisfied, 
 the power-law result still holds. 
 %We omit 
 The proof can be obtained by extending the proof for Theorem~\ref{theorem:powerlaw}. A similar study was conducted in~\cite{adamic1999nature}. 

\vpara{Behavior of \model: when does the ``winner take all''?}
Let us now examine the behavior of our framework to see when the winner-take-all mechanism holds and leads to the power-law phenomena. Consider a simple system, in which each user shares the same individual model with the parameter $p$ to perform an action. It turns out there is a lower bound on $p$ for the existence of the power-law distribution over $N_z$. 

%To further study the properties of the proposed framework, we consider a simple system, in which each individual shares the same probability $p$ to perform the action $z$. In this case, we have the following theorem.

\begin{theorem}
$N_z$ is power-law distributed when $p > \frac{1}{m}$, where $m$ is the number of users in $G$. 
\end{theorem}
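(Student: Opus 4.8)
The plan is to specialize Theorem~\ref{theorem:powerlaw} to the homogeneous system described in the statement. When every user shares the individual-model parameter $\mu_v=p$, the group-model parameters reduce to $\tau=\sum_v\mu_v=mp$ and $\delta^2=\sum_v\mu_v(1-\mu_v)=mp(1-p)$. Substituting these two values into the power-law exponent $\alpha=-1+\frac{\tau}{\delta^2}-\frac{\tau^2-2\lambda\delta^2}{\delta^2}$ from Eq.~\ref{eq:meso2macro} turns $\alpha$ into a function of the single variable $p$ (with $m$ and the window parameter $\lambda$ treated as constants), and the theorem becomes the inequality $\alpha(p)<0$, which is exactly the condition under which the excerpt declares $N_z$ to be power-law distributed.

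First I would simplify the exponent: $\frac{\tau}{\delta^2}=\frac{mp}{mp(1-p)}=\frac{1}{1-p}$ and $\frac{\tau^2-2\lambda\delta^2}{\delta^2}=\frac{(mp)^2}{mp(1-p)}-2\lambda=\frac{mp}{1-p}-2\lambda$, so that $\alpha=-1+\frac{1-mp}{1-p}+2\lambda=\frac{p(1-m)}{1-p}+2\lambda$. Since $m>1$, the map $p\mapsto\alpha(p)$ is strictly decreasing on $(0,1)$ (its $p$-derivative is $\frac{1-m}{(1-p)^2}<0$), so there is a unique root $p^\star$ with $\alpha(p)<0\iff p>p^\star$. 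Clearing the positive factor $1-p$ from $\frac{p(m-1)}{1-p}>2\lambda$ gives $p(m+2\lambda-1)>2\lambda$, i.e. $p^\star=\frac{2\lambda}{m+2\lambda-1}$.

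The final step is to read off the asymptotic size of this threshold. In the regime of interest---$m$ large, which is also the regime in which the lognormal approximation of group behavior is accurate---and for $\lambda$ of constant order, $p^\star=\frac{2\lambda}{m+2\lambda-1}=\Theta(1/m)$, and under the paper's implicit normalization of the observation window it collapses to $p^\star\approx\frac1m$; hence $p>\frac1m$ already forces $\alpha<0$ and thus a power law for $N_z$, whereas below this value $\alpha$ becomes positive and the ``winner-take-all'' behavior disappears.

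I expect the only real obstacles to be bookkeeping rather than ideas. Two points need care. First, one should check that the hypotheses behind Theorem~\ref{theorem:powerlaw} survive the specialization---in particular the convergence condition $a=\lambda-\tau^2/(2\delta^2)\le 0$ used in its proof, which here reads $\lambda\le\frac{mp}{2(1-p)}$ and is automatically met once $p>\frac1m$ and $\lambda$ is not taken too large. Second, one should be upfront that ``$p>\frac1m$'' is the asymptotic (normalized) form of the exact threshold $p^\star=\frac{2\lambda}{m+2\lambda-1}$: the factor $2\lambda$ and the $-1$ in the denominator are genuinely present, so I would state the exact inequality first and then pass to the $\frac1m$ approximation explicitly, rather than absorb the constants silently.
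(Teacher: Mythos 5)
Your route is the same as the paper's: specialize Theorem~\ref{theorem:powerlaw} by setting $\tau=mp$ and $\delta^2=mp(1-p)$, substitute into Eq.~\ref{eq:meso2macro} to get $\alpha=-1+\frac{1-mp}{1-p}+2\lambda$, and solve $\alpha<0$ for $p$; up to that point your algebra agrees with the paper's. Where you diverge is the last step, and in fact your version is the more careful one: solving $\alpha<0$ exactly gives your threshold $p^\star=\frac{2\lambda}{m+2\lambda-1}$, whereas the paper's displayed chain jumps to $p>\frac{2\lambda+2}{m+1}>\frac{1}{m}$, which does not follow from the line above it. Two caveats are worth making explicit. First, the direction of implication: the paper's proof (and the surrounding text about ``a lower bound on $p$'') reads the theorem as a necessary condition, i.e.\ $\alpha<0\Rightarrow p>\frac1m$, while you read it as sufficiency, $p>\frac1m\Rightarrow\alpha<0$; since $\alpha<0\iff p>p^\star$, sufficiency of $\frac1m$ requires $p^\star\le\frac1m$, i.e.\ $\lambda\le\frac12$, and necessity requires $\lambda\ge\frac12$, so neither reading holds for all $\lambda$ and the identification $p^\star\approx\frac1m$ is really $p^\star\approx\frac{2\lambda}{m}$ unless $\lambda$ is pinned near $\frac12$ --- you flag this, which is the right instinct, and stating the exact inequality $p>\frac{2\lambda}{m+2\lambda-1}$ before any $\frac1m$ approximation is preferable to what the paper prints. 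Second, your check that the convergence condition $\lambda\le\frac{\tau^2}{2\delta^2}=\frac{mp}{2(1-p)}$ from the proof of Theorem~\ref{theorem:powerlaw} survives the specialization is a point the paper omits entirely; keeping it is a genuine improvement rather than mere bookkeeping.
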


\begin{proof}
According to Eq.~\ref{eq:micro2meso}, we have $\tau=mp$ and $\delta=mp(1-p)$. Also, according to Eq.~\ref{eq:meso2macro}, we obtain that $\beta=-1+\frac{1-mp}{1-p}+2\lambda$. 

The power-law holds for $N_z$ when $\alpha < 0$, that is, 

\beqn{
\label{eq:power_law_hold}
\besp{
\alpha < 0 
\Rightarrow& -1+\frac{1-mp}{1-p}+2\lambda < 0 \\
\Rightarrow& p > \frac{2\lambda+2}{m+1} > \frac{1}{m}.
}
}
\end{proof}

When does the ``winner take all''? Assuming we have a purely random ($p=0.5$) system for the evolution of group behavior, as $m$ is usually large, we are safe to claim that a power-law holds. However, considering a deterministic system, in which no user will adopt any action ($p \rightarrow 0$), the power-law will fail. 

%Considering a deterministic system, in which everyone adopts every action equally ($p \rightarrow 1$), the power-law also holds (although the exponent is infinite). However, for another deterministic system, in which every user will not adopt any action ($p \rightarrow 0$), the power-law will fail. 

Through the above discussions, we are given some insight into the fact that randomness and the aggregate effect of individual actions finally result in 
%causes 
the macroscopic-power-laws phenomena.  %\reminder{YY: are we safe to claim this conclusion?} 

%This suggests that, for the behaviors each user takes randomly ($p=0.5$), and the behaviors each user takes definitely ($p \rightarrow 1$), power-law also hods . However,  for the network phenomena. However, when users tend to not take the behavior ($p \rightarrow 0$), power-law will fail to model network phenomena. 

\vpara{Significance.} %We demonstrate how our framework captures the heavy-tailed network phenomena. Specifically, we give an explanation of how power-law arise macroscopically and a formal proof of our conclusion. 
We provide evidence of how heavy-tailed phenomena in social networks emerge from individual actions. % the group behaviors in the network---the collection of random (Binomial) individual actions---follow a lognormal distribution. 
Specifically, a power-law distribution is determined by the parameters derived from our individual models and group models. 
We provide theoretical conditions under which a power-law distribution can form in social networks.

\subsection{Further Discussions}

\vpara{Application.} We discuss some potential applications of the proposed framework. One can integrate any machine learning algorithm into the individual model. For instance, when studying the information diffusion process in Twitter,
% for each user $v$, and the retweeting behavior corresponding to each tweet $z$,
 for modeling user $v$'s retweeting behavior,
  we can define features (e.g., the likelihood of the user's interests matching with the tweet's topics, profiles of the user, etc.) and construct a feature vector $\mathbf{s_{vz}}$. We then use a Logistic function $f(\cdot)$ with $s_{vz}$ to represent the individual model $\mu_{vz}$. With efficient training samples, we are allowed to estimate $f(\cdot)$'s parameters using Maximum Likelihood Estimation (MLE)~\cite{akaike1998information}. Notice that the logistic function can be replaced by any other classification or regression models.
  % while the MLE method could also be replaced according to the applications.
   After that, we are able to generate the ``adopt'' decisions of individual users, for the instantiation of the group model according to Eq.~\ref{eq:meso}, and demonstrate how the retweets each tweet receives evolve over time. At last, we can calculate the parameters of the heavy-tailed distributions to present the macroscopic phenomena. %In Section~\ref{sec:exp}, we will demonstrate three applications to validate the proposed framework. 

\vpara{Connection with previous work.} 
%We note that 
The proposed framework can be viewed as a generalization of several existing models. 
%As we have discussed above, general machine learning methodologies can be used as instantiations of the individual model. 
In Eq~\ref{eq:meso}, when $U=D$, the group model can be viewed as a generalized Black-Scholes option pricing model~\cite{black1973pricing}. The connection with individual models and group models is a natural multiplicative process~\cite{mitzenmacher2004brief}. When group behaviors follow lognormal distributions, while the conditions of $\phi=\tau t $ or $\varphi = \delta^2 t$ are not satisfied, the integration of group behaviors with heavy-tailed phenomena is similar to that of the evolutionary process of sites on the Web~\cite{adamic1999nature}.

\section{Experimental Setup}
\label{sec:observation}

%In this section, we  examine whether the real-world data follows the distributions derived from our framework. 
%Specifically, we first demonstrate that the group behaviors at each timestamp in networks are lognormally distributed as suggested by our framework.
% promises. 
%We then give empirical evidence that the integration of individual actions over the observed timespan finally results in a power-law distribution.
% as suggested by our framework. 

\begin{table}[t]
\centering \caption{\label{tb:data} Statistics of the datasets.}
\small
\renewcommand\arraystretch{1.2}
\begin{tabular}{c|c|c|c}
\hline & Weibo & Citation & Flickr \\
\hline Users & 326,497,021 & 1,712,431 & 259,565 \\
\hline Posts/Papers/Photos & 4,634,168,136 & 2,092,356 &  854,734 \\
\hline Users' relations & 3,274,895,719 & 6,485,521 &  1,898,069\\
\hline Action logs & 1,026,243,542 & 8,012,227 & 3,884,739 \\
%\hline \multirow{2}{*}{Behaviors} & Retweet & Cite \\
%\cline{2-3} & 1,026,243,542 & \\
\hline Time period & 4 months & 79 years & 12 months \\
%\hline $\alpha$, $\beta$, $\gamma$, $\tau$ & Hyper-parameters of the proposed model\\
%\hline $A_{it}$ & The set of users who are activated for message $i$ at time $t$\\
%\hline $D_{it}$ & The set of users who have been activated for message $i$ by time $t$ \\
\hline
\end{tabular}
\normalsize
\end{table}

\begin{figure*}[t]
\centering
\mbox{
%\small
\subfigure[3rd hour on Weibo]{
\label{fig:group_weibo}
\epsfig{file=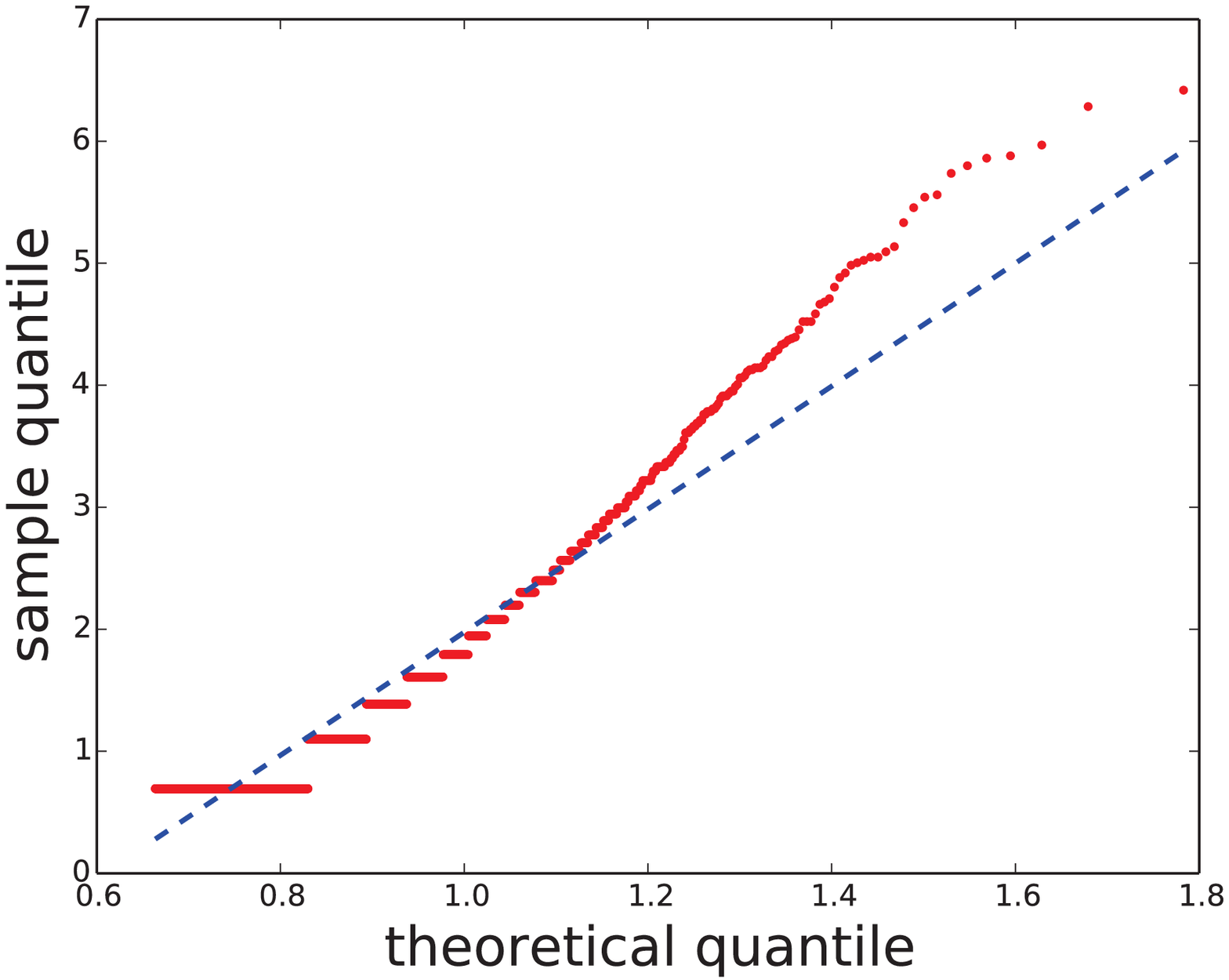, width=2in}
}
\subfigure[9th year on Citation]{
\label{fig:group_citation}
\epsfig{file=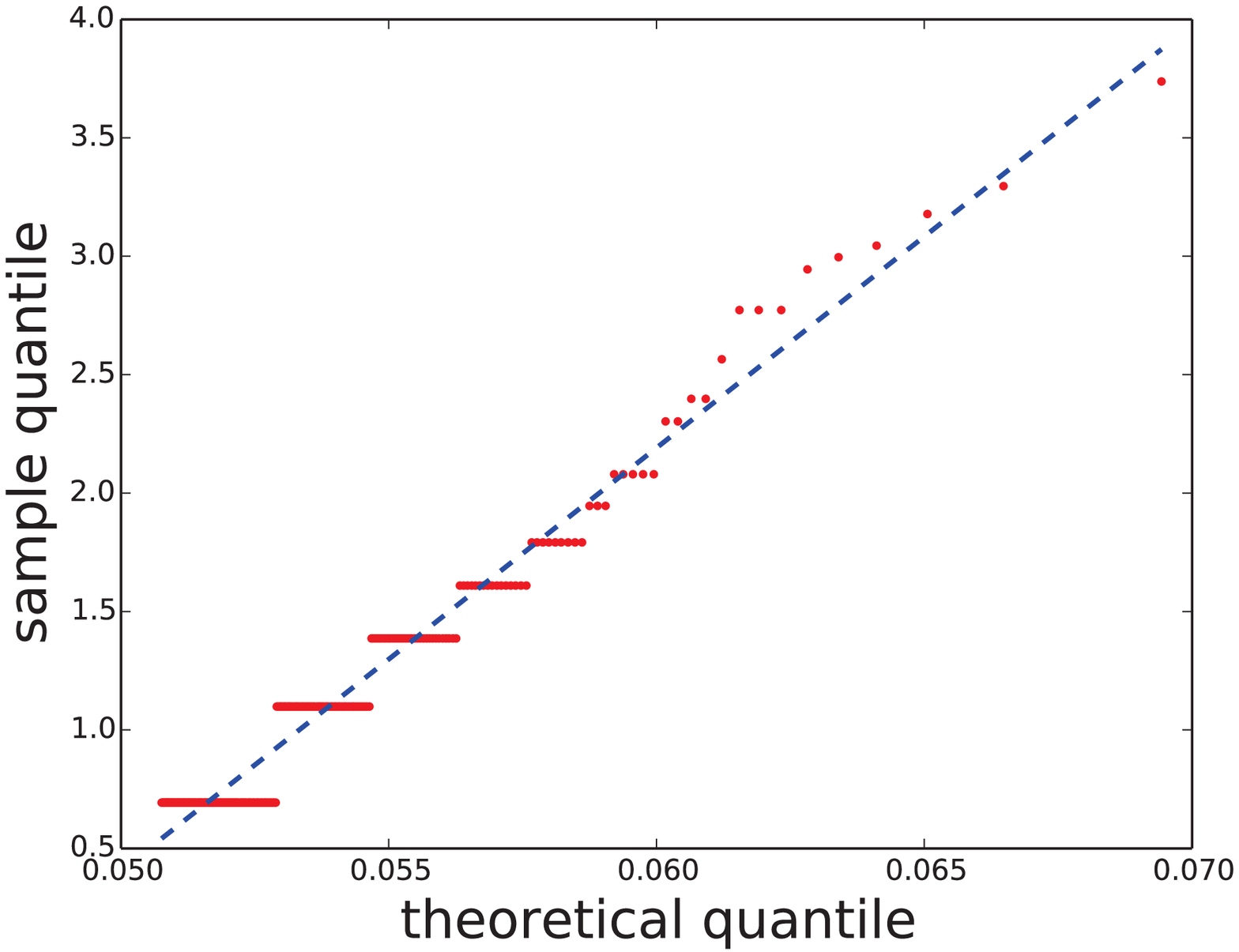, width=2in}
}
\subfigure[23rd hour on Flickr]{
\label{fig:group_photo}
\epsfig{file=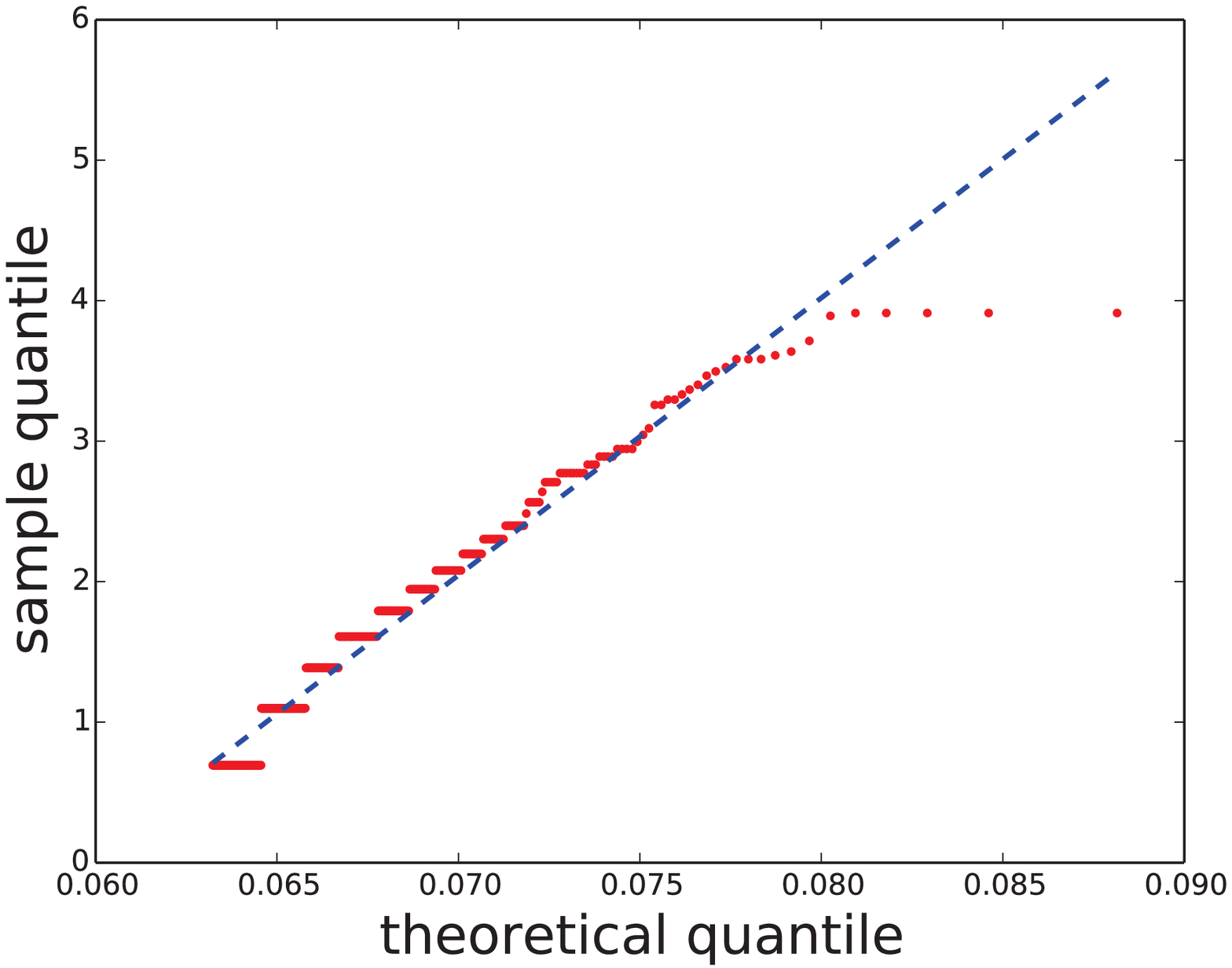, width=2in}
}
}
%\vspace{-0.1in}
\caption{QQ-plots for analyzing the distribution of group behavior at a particular timestamp on each dataset.} 
\label{fig:group_behavior}
%\vspace{-0.07in}
\end{figure*}

\begin{figure*}[t]
\centering
\mbox{
%\small
\subfigure[]{
\label{fig:macro_post}
\epsfig{file=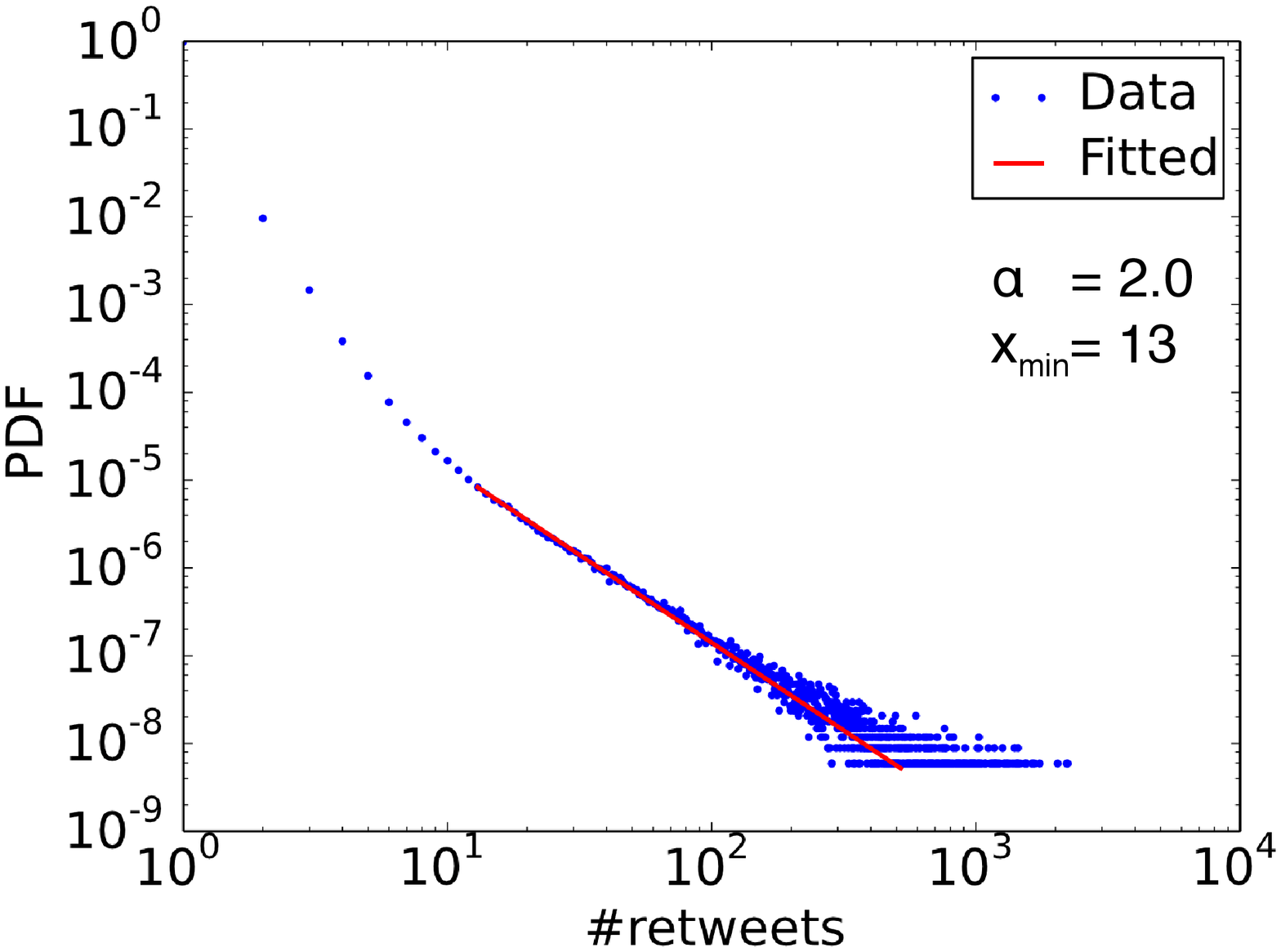, width=2in}
}
\subfigure[]{
\label{fig:macro_paper}
\epsfig{file=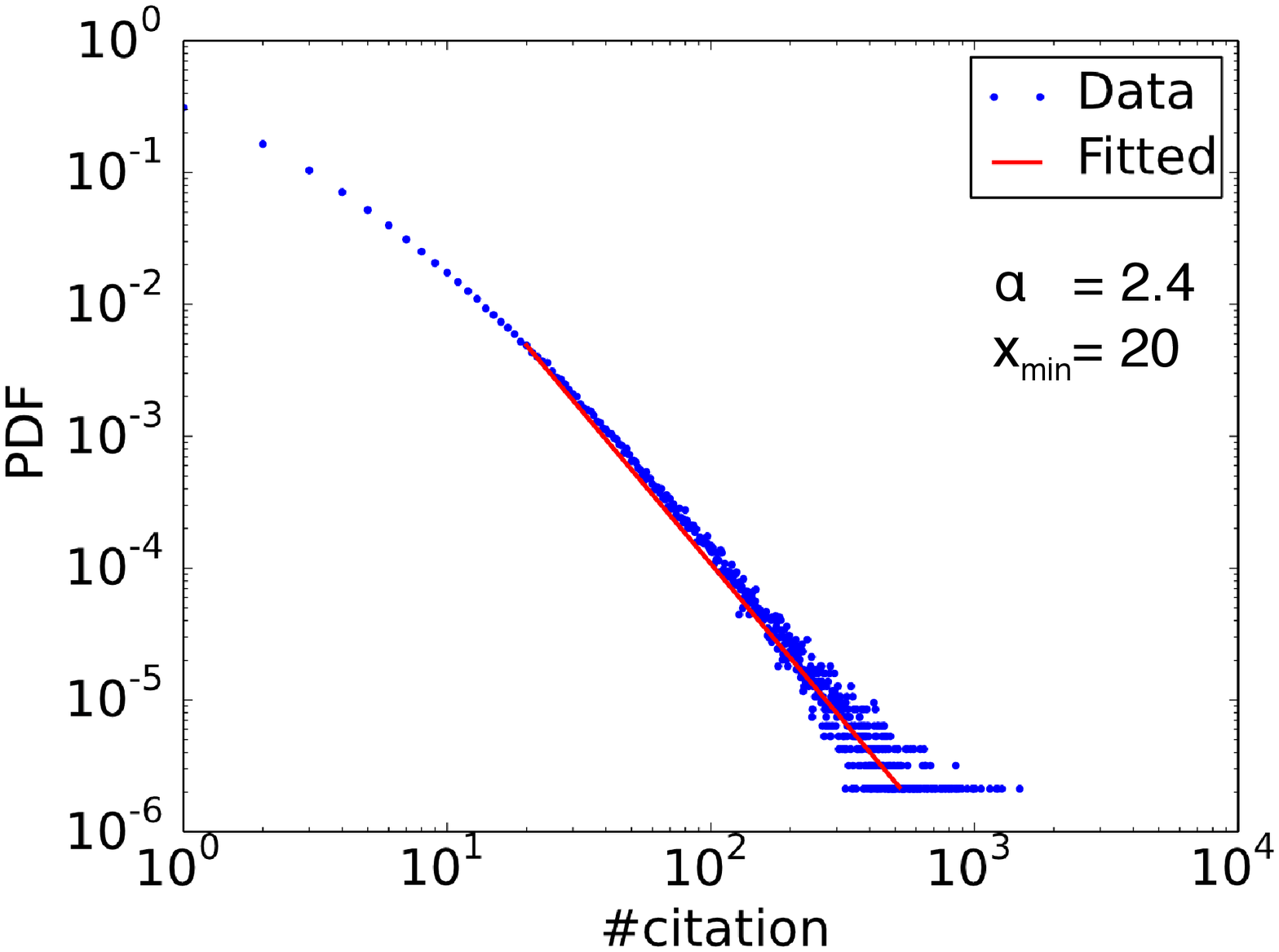, width=2in}
}
\subfigure[]{
\label{fig:macro_flickr}
\epsfig{file=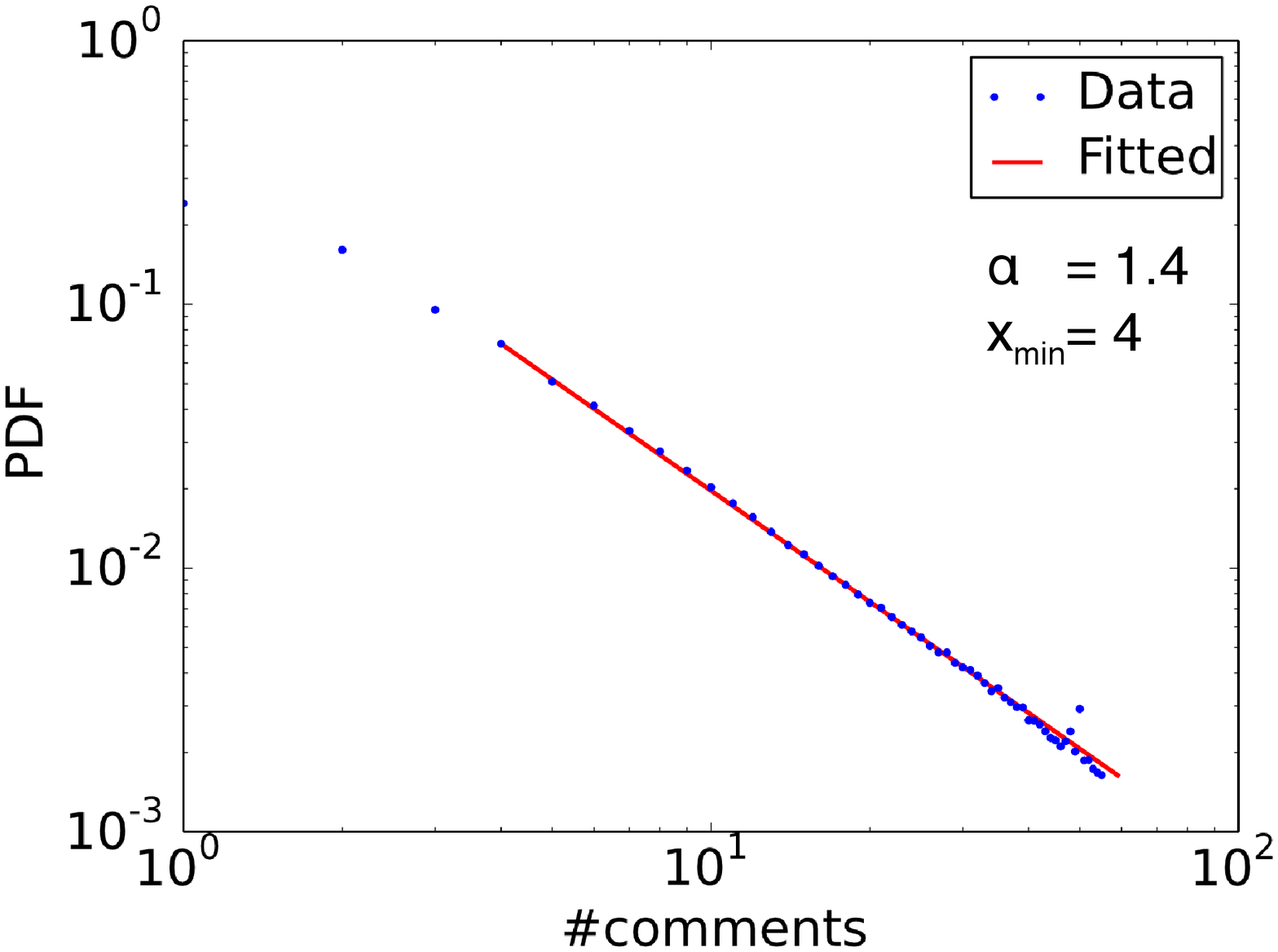, width=2in}
}
}
%\vspace{-0.1in}
\caption{Heavy-Tailed Phenomena in Real-World Networks. Blue dots are observed data. Red lines are power-law distributions which best fit the data, with estimated exponent parameters and truncation points represented in the figures. All figures are plotted on a log-log scale. } 
%\vspace{-0.07in}
\label{fig:macro}
\end{figure*}

\subsection{Datasets}

We verify the proposed framework on three different genres of large datasets:
Tencent Weibo, Citation, and Flickr,
% including social media (Weibo), academic network (Citation), and photo sharing network (Flickr). 
Statistics of the three datasets are summarized in Table~\ref{tb:data}. 

\vpara{Tencent Weibo~\cite{yang2014rain}.} It is one of the most popular microblogging services in China. 
The dataset consists of 326,497,021 users, 3,274,895,719 following relationships, and 4,634,168,136 tweets, spanning over 4 months between Oct. 1st, 2011 and Jan. 30th, 2012. 
%Using this social media
In this dataset, we define the user's retweeting behavior as the individual action. 
%we aim to study the user retweeting actions in different dimensions. 

\vpara{Citation~\cite{Tang:08KDD}.} It is from ArnetMiner\footnote{http://aminer.org/citation}.%, which is a free online service for academic social network analysis and mining. 
This  dataset consists of 2,092,356 papers published during 1950 and 2012. From those papers, we derive 1,712,433 authors, and 8,012,227 citation relationships  between papers.
In this  dataset, the individual action is defined by a paper's citation behavior (determined by the authors in the course of scientific writing). 
%We use this academic dataset to investigate citing actions in the course of scientific writing. 

\vpara{Flickr~\cite{yang2014your}.} It was crawled from Flickr. 
%Flickr is the largest photo sharing website in the world. 
This dataset contains 854,734 photos and 3,884,739 comments generated by 259,565 users. 
The dataset is used to investigate user commenting actions in photo sharing networks. 
Specifically, we define the individual action as whether a user posts a new comment to a specific photo.
%, this behavior is called an individual action. 

\subsection{Data Analysis}

\vpara{Group behavior.} 
We examine how group behavior $n_z(t)$, such as $\#$retweets, $\#$citations, and $\#$comments at each timestamp $t$, on real datasets distributes. 

To test the hypothesis that the group behavior at a particular timestamp is lognormally distributed, we utilize QQ-plot~\cite{Wilk:QQ1968}, which is commonly used to compare two distributions by plotting their quantiles against each other. Specifically, for timestamp $t$, we first estimate the parameters of empirical normal distribution of $\ln n_z(t)$, by a MLE method, which is also used in~\cite{zaman2014bayesian}. We then plot the quantiles of $\ln n_z(t)$ in real data against certain quantiles of the estimated normal distribution. 

Figure~\ref{fig:group_behavior} shows the results on all datasets. 
For instance, in Figure~\ref{fig:group_weibo}, we test the distribution form of $\#$retweets that a tweet receives at the 3rd hour since that tweet is posted in Weibo network. 
The approximate linearity of the plotted points suggests $\#$retweets, at the 3rd hour since the original tweet is posted, follows a lognormal distribution. Analogously, we observe similar results on $\#$citations and $\#$comments in the Citation and Flickr networks, as shown in Figures~\ref{fig:group_citation} and~(c). We also archieve similar results at other timestamps on all datasets. %but only present one result of each dataset due to the space limitation. 

\hide{
We test whether the collections of individual actions at each timestamp in networks, such as $\#$retweets, $\#$citations, and $\#$comments, follow the lognormal distributions as our framework suggests. 
In statistics, a QQ plot is commonly used to compare two probability distributions by plotting their quantiles against each other~\cite{Wilk:QQ1968}. 

%A lognormal distribution is one whose logarithm is a Gaussian distribution. 

Figure~\ref{fig:group_behavior} shows the QQ plots for comparing empirical distributions of group behaviors in three networks with theoretical lognormal distributions, estimated using Maximum Likelihood Estimation (MLE)~\cite{zaman2014bayesian}. 
% method introduced in
The $x$-axis represents the theoretical quantile and $y$-axis represents the data quantile. 
The points in the QQ figure will approximately lie on the linear line \hide{$y=x$} if two distributions are similar. 
In Figure~\ref{fig:group_weibo}, we aim to test the distribution form of $\#$retweets that a tweet receives at a particular timestamp in the Weibo network. 
The approximate linearity of the plotted points suggests the collection of retweets each post receives at each timestamp is lognormally distributed. 
Analogously, we observe similar results on citing and commenting actions in the Citation and Flickr networks, as shown in Figures~\ref{fig:group_citation} and~(c). 
%We conclude that 
This indicates that the proposed framework is able to correctly capture the distributions of group behaviors in social networks. 
}

\vpara{Heavy-Tailed distribution.} 
We now examine the network distribution, $P(N_z)$, on real datasets. To do so, in Weibo (Citation or Flickr) network we plot $\#$retweets ($\#$citations or $\#$comments) each tweet (paper or photo) receives within 4 months (79 years or 12 months) in Figure~\ref{fig:macro}. 
The linearity on log-log scales suggests a heavy-tailed distribution on all datasets. 

We try to fit power-law distributions on the three datasets by a classical fitting technique~\cite{clauset2009power} which determines two parameters: a truncation point $x_{min}$ that governs the lower bound, above which the observed data obey a power-law distribution, and the exponent parameter $\alpha$ of the potential power-law distribution. 

We present the power-law distributions that best fit the data in each network in Figure~\ref{fig:macro} by blue lines. We observe that the power-law distributions in the Weibo network with exponent $2.0$, the Citation network with exponent $2.37$, and the Flickr network with exponent $1.4$. 
The corresponding truncation points $x_{min}$ in each network are 13, 20, and 4, respectively. 
We use the $p$-value~\cite{clauset2009power} and Residual Sum of Squares (RSS)~\cite{draper2014applied} together to test the power-law hypothesis.  
We find that the $p$-values are all greater than 0.1. The average value of RSS on each dataset is $2.3\times 10^{-6}$ (see details in Table~\ref{tb:rss}). The $p$-value, together with the small valued RSS scores, suggest that the real-world data exhibits power-law distributions~\cite{clauset2009power}. 
We also try to fit lognormal distributions to the observed data. However, we achieve negative results on all datasets. 

\hide{
We now test 
%provide empirical evidence that 
whether the integration of individual actions over observed timespan results in a power-law distribution as suggested by \model.  
To do so, we fit power-law models to the three datasets by using the classical fitting technique~\cite{clauset2009power}. 
This method is determined by two parameters: one is a truncation point $x_{min}$ that governs the lower bound, above which the observed data obey a power-law distribution, and the other one is the exponent parameter $\alpha$ of the potential power-law distribution. 
%These two parameters are estimated by Maximum Likelihood Estimation~\cite{clauset2009power}. 

Figure~\ref{fig:macro} plots the original distributions from real data and also the power-law distributions that best fit the data in each network. 
We observe that the power-law distributions in the Weibo network with exponent as $2.0$, the Citation network with exponent as $2.37$, and the Flickr network with exponent as $1.4$. 
The corresponding truncation points $x_{min}$ in each network are 13, 20, and 4, respectively. 
We use the $p$-score~\cite{clauset2009power} and Residual Sum of Squares (RSS)~\cite{draper2014applied} together to examine the fitting results. 
We find the $p$-scores of the fitting models are all greater than 0.1, and the average value of RSS on each dataset is $2.3\times 10^{-6}$ (see details in Table~\ref{tb:rss}). The $p$-scores, together with the small valued RSS scores, suggest that the real-world data exhibits power-law distributions~\cite{clauset2009power} as our framework assumes. 
}

\subsection{Evaluation Measures}
\label{sec:setting}

It is difficult to find a ground truth to evaluate the accuracy of the proposed framework. 
As the most important feature of \model\ is the connection of macro-level network distribution with meso-level group behavior and micro-level individual actions; for quantitatively evaluation, we apply \model\ to fit the (macro-level) heavy-tailed distribution from observed individual actions. 
One advantage of \model\ is that it can fit a network distribution from partially observed data. The other advantage is that we can use the group behavior or individual actions to explain the formation of the fitted distributions.

Another idea to evaluate \model\ is to apply it to some prediction task.  \model\ can better capture group behavior in social networks. Thus we apply \model\ to information burst prediction and evaluate the prediction performance in terms of Precision, Recall, and F1-Measure.

%We conduct experiments to quantitatively evaluate whether \model \ can model the emerging network distributions in social networks from individual actions. We consider the following evaluation aspects:

\hide{
\begin{itemize}
   \item \textbf{Fitting Network Distribution using Partially Observed Actions.} 
   It is to examine to which extent \model \ can capture the emerging process of network distribution  from individual actions in real social networks. 
   
   \item \textbf{Fitting Heavy Tail from Group Behavior.}
   It is to examine to which extent \model\ can capture the emerging process of network distributions from group behavior.
   
   \item \textbf{Group Behavior Prediction.}
   
   It is to examine to which extent \model\ can model the collective process of group behaviors from individual actions in real social networks. 
   
   \item \textbf{Information Burst Prediction.} Finally, we use this application to further demonstrate the effectiveness of the proposed framework.
\end{itemize}
}

\hide{
We employ three real datasets for both observation analysis and our experiments. We will introduce each dataset below.  See details of data summarization in Table~\ref{tb:data}. 

We use Tencent Weibo\footnote{http://t.qq.com}, a popular Twitter-like microblogging service in China, as our social media data. The dataset is obtained from~\cite{yang2014rain}. 
Users in Tencent Weibo are allowed to follow other users, tweet/retweet posts, leave comments to a post, etc. The dataset contains the complete directed following networks and posting logs (tweets) of 320 million users over 4 months, which is from October 1st, 2011 to January 30th. If there exists a following link from a user $v$ to another user $u$, we say that $v$ is a follower of $u$, and that $u$ is a followee of $v$. Similar to Twitter, there are two types of posts in Tencent Weibo, namely original posts (tweets) and reposts (or retweets). Based on this dataset, we aim to study the retweet behavior of users. See details of the data statistics in Table~\ref{tb:data}. 

\vpara{Citation.} We use citation network, which is collected by ArnetMiner\footnote{http://arnetminer.org}, an academic search system. The dataset is obtained from~\cite{Tang:08KDD} and extracted from 2,092,356 computer science papers published from 1935 to 2014. The network consists of 1,712,431 authors and 8,012,227 citations among papers. Based on this dataset, we aim to study the behavior of citing papers.  

\vpara{Flickr.} We use the Flickr network, the largest photo sharing website\footnote{http://flickr.com}, provided by~\cite{yang2014your}. The dataset consists of 854,734 photos, 3,884,739 comments, and 259,565 users. Users on Flickr like to share and exchange their daily experiences by sharing photos and leaving comments on others' photos. Usually, a comment left by a user will trigger another user to leave a comment for discussion. Thus, based on this dataset, we aim to study the user behavior of commenting a photo. 

\subsection{Observations}

In order to test the result of our framework with the particular configuration of upward factor and downward factor, we conduct several preliminary observations on our three datasets introduced in Section~\ref{sec:exp}. 

For group behavior, we validate whether the number of retweets an original post receive at a particular time stamp follows a log-normal distribution. Figure~\ref{fig:group_behavior} shows the results of QQ plots at different time stamps. The linearity of the points suggests the retweets each post receives at each time stamp are log-normally distributed. 

For network phenomena, we attempt to model the total number of users who have ever taken the objective behavior, which we aim to study, by using power-law distributions. Following the lead of prior work~\cite{clauset2009power}, we determine a power-law fit by using Maximum Likelihood Estimation (MLE). For implementation, we use the online public code\footnote{ http://www.tuvalu.santafe.edu/$\sim$aaronc/powerlaws/}. This method consists of two parameters: a truncation point, which determines the lower bound, above which the observed data obey power-law behavior, and the exponent $\alpha$ of the power-law. The power-law distributions that best models the observed data are shown in Figure~\ref{fig:macro}. On all datasets, the $p$-score of the fitting results are all greater than $0.1$, which suggests the observed data follows power-law distributions.

}%end of hide

\begin{figure*}[t]
\centering
\mbox{
%\small
\subfigure[]{
\label{fig:macro_pred_tweet}
\epsfig{file=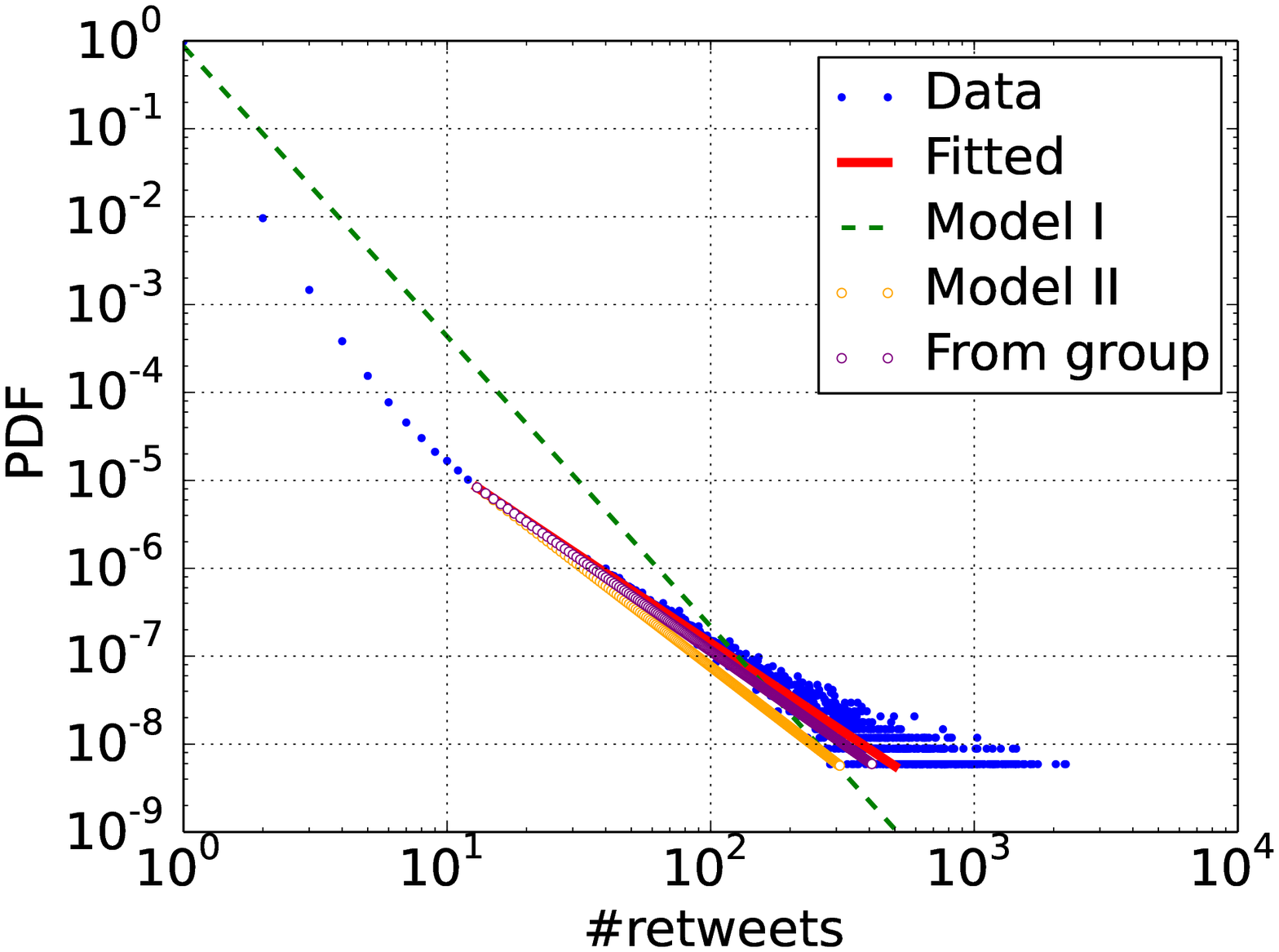, width=2.2in}
}
}
\hspace{-0.2in}
\mbox{
\subfigure[]{
\label{fig:macro_pred_paper}
\epsfig{file=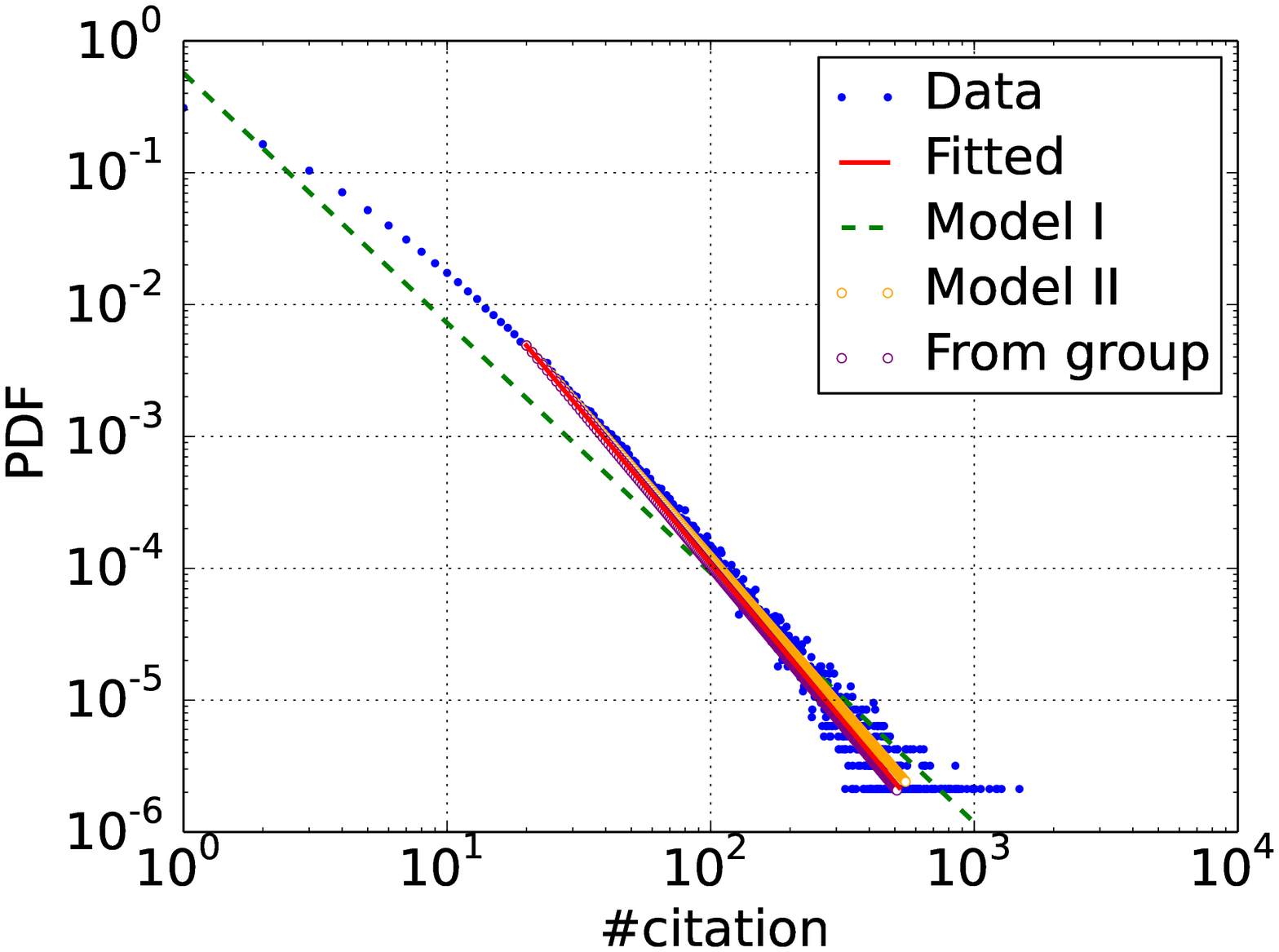, width=2.2in}
}
}
\hspace{-0.2in}
\mbox{
\subfigure[]{
\label{fig:macro_pred_flickr}
\epsfig{file=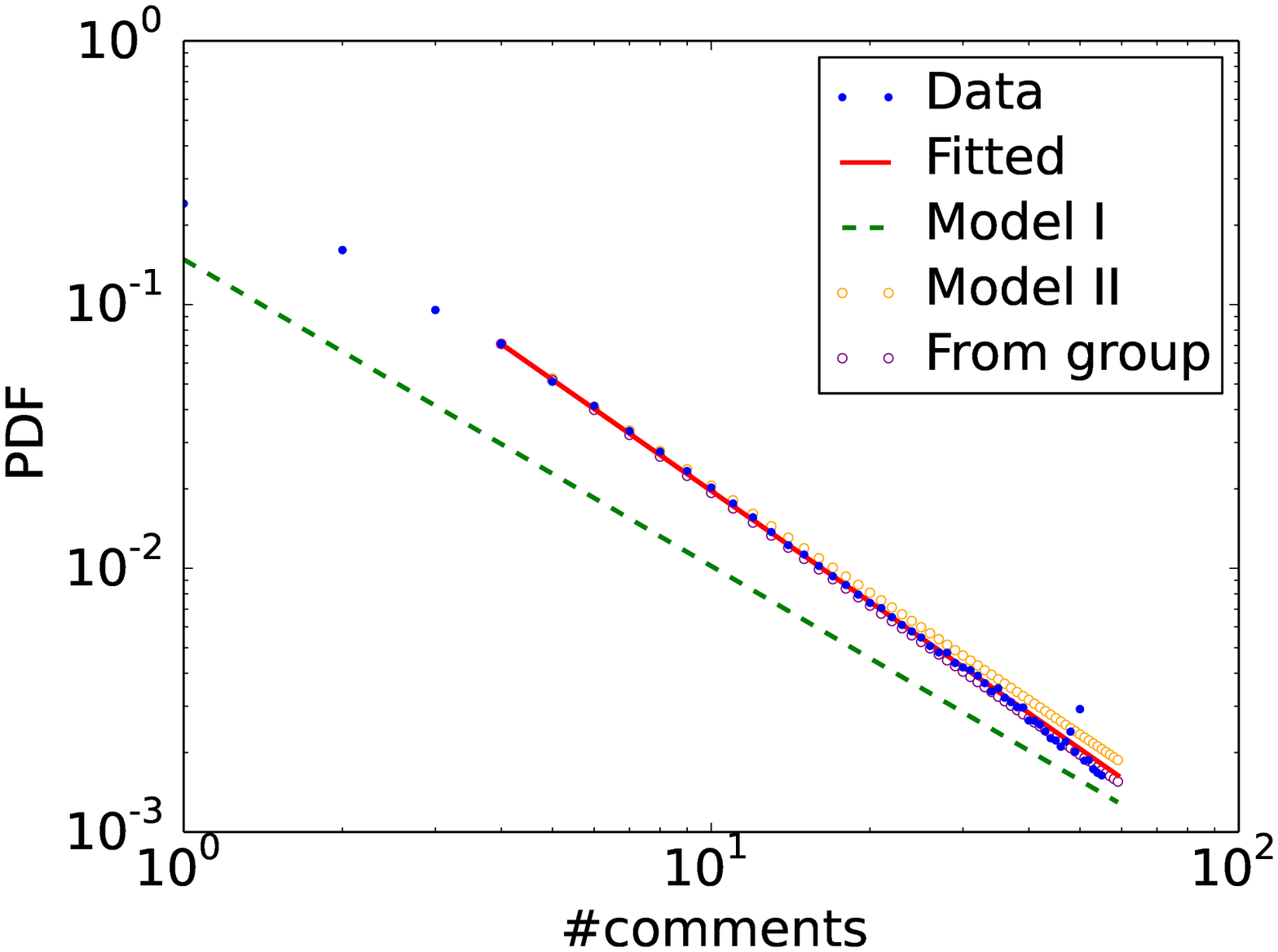, width=2.2in}
}
}
%\vspace{-0.1in}
\caption{Fitting the heavy-tailed phenomena in (a) Weibo, (b) Citation, and (c) Flickr. All figures are plotted on a log-log scale. In each figure, x-axis denotes the number of retweets (citations or comments). Y-axis denotes the fraction of tweets (papers or photos). Blue solid dots are observed data. The red solid line is the result of fitting the observed data. The green dashed line is the result obtained by our framework, using partially observed individual actions. Yellow hollow dots are obtained by our framework considering truncation points. Purple hollow dots represent the fitting results using group behaviors. } 
%\vspace{-0.07in}
\label{fig:macro_pred}
\end{figure*}

\section{Experimental Results}
\label{sec:exp}

To quantitatively validate \model, 
we consider the following three evaluation aspects:

\begin{itemize}

\item \textbf{Fitting heavy-tailed phenomena using partially observed actions:} We thus examine to what  extent \model \ can capture the emerging process of heavy-tailed phenomena by using partially observed individual actions in real social networks. 

\item \textbf{Group behavior prediction:} By this we examine the extent to which \model\ can model the aggregate effect of group behaviors from individual actions in real social networks. 

\item \textbf{Information burst prediction:} Finally, we use this application to further demonstrate the effectiveness of \model.
\end{itemize}

%Our experiments are performed in Weibo, Citation, and Flickr networks introduced in Section~\ref{sec:observation}.

%In this section, we conduct several experiments to validate the proposed framework and demonstrate how to apply it on real datasets. Generally speaking, we use the heavy-tailed phenomena modeling task to validate whether we can model the heavy-tailed phenomena from individual actions by our proposed model. We then use the collective action modeling task to validate the dynamical system defined in Eq.~\ref{eq:meso}. We at last show an application of our framework on information burst prediction task. All experiments are conducted based on three real datasets, \textit{Weibo}, \textit{Citation}, and \textit{Flickr}, which are introduced in Section~\ref{sec:observation}.

\begin{table}[t]
\centering \caption{\label{tb:rss} Residual Sum of Squares (RSS) for fitting heavy-tailed phenomena. }
\small
\begin{tabular}{c|c|c|c}
\hline & Weibo & Citation & Flickr \\
\hline Fitted & $6.180 \times 10^{-13}$ & $1.426 \times 10^{-6}$ & $5.606\times 10^{-6}$ \\
\hline Model I & $8.240\times 10^{-1}$ & $7.887\times 10^{-2}$ & $3.867 \times 10^{-2}$ \\
\hline Model II & $1.268\times 10^{-10}$ & $7.366\times 10^{-7}$ & $7.000 \times 10^{-4}$ \\
\hline From Group & $8.062\times 10^{-13}$ & $1.721\times 10^{-6}$ &  $1.059\times 10^{-5}$\\
\hline
\end{tabular}
\normalsize
\end{table}

\subsection{Fitting Heavy-Tailed Phenomena using Partially Observed Actions}
\label{sec:exp_macro}

This task is to demonstrate whether \model \ can capture the emerging process of heavy-tailed phenomena by using only partially observed individual actions in real social networks. 

%\vpara{Setting.} 
%We conduct this experiment to validate whether our framework is able to model the heavy-tailed phenomena according to the individual actions in practice. 

\vpara{Problem.}
Given an observed subnetwork $G=(V, E)$ of a complete network $H$ over a time window $T$ and a set of individual actions $X=\{x_{tvz} | t \leq T, v\in V\}$ in $G$, the goal is to estimate whether $N_z$ in real networks---the number of users who perform the action $z$ in the complete network $H$ over the observed timespan $T$---follows the power-law distribution parameterized by our framework. 

Specifically, in the Weibo network, $x_{tvz} = 1$ indicates that user $v$ retweets a particular tweet, indicated by $z$, at timestamp $t$. In Citation, it means author $v$ cites another a particular paper $z$ at time $t$. 
Analogously, the action that user $v$ posts a comment to photo $z$ at time $t$ is denoted by $x_{tvz} = 1$ in the Flickr network. 
%We use $Z$ to denote the set of all tweets (papers or photos) in Weibo (Citation or Flickr) network. 

%We use Weibo network as an example to introduce the setting of this task. The input includes an observed subnetwork $G=(V, E)$ of the whole Weibo network $H$, a set of retweeting actions $\{z_i\}$, where each $z$ indicates retweet the post indexed by $z$, the observation time window $T$, a set of individual actions, which records the actions of each user in the observed subnetwork $G$ takes within time $T$, $X=\{x_{tvd} | t \leq T, v\in V\}$, where $x_{tvz}=1$ indicating user $v$ retweets post $z$ at time $t$. Our goal is to estimate the probability distribution function (PDF) of $N_z$, the number of users who adopt to retweet post $z$ in the complete network. The actions $z$ differentiates the heavy-tailed phenomena modeling task on different datasets. Instead of studying retweeting behavior, on Citation network, each $z$ indicates citing a paper indexed by $z$; and on Flickr network, $z$ denotes leaving a comment on the photo indexed by $z$. 

\vpara{Setup.}
In Section~\ref{sec:observation}, we conclude that all three networks exhibit power-law distributions and also provide the estimated exponent parameters that best fit the real data. 
We use Residual Sum of Squares (RSS)~\cite{draper2014applied} to quantify the distance between the distributions of $N_z$ in real data and the distributions provided by our framework. A smaller RSS represents a better-fitted distribution.

We introduce how we generate the observed network $G$ in different datasets. 
In the Weibo network, we choose the users who retweet a tweet $z\in Z$, where $Z$ is the set of all tweets exposed to users $V$ in $G$, within the first 50 minutes since $z$ was posted, and the followers/followees of these users. 
In Citation, $G$ consists of authors who cite papers within the first year since these papers are published. 
In Flickr, the users who posted comments to photos within the first hour since the photos are posted are chosen as the subnetwork $G$.

We introduce two methods to apply our framework to estimate the exponent parameter $\alpha$ of power-law distributions from individual actions in real data.  

\textit{Model I.} We model individual actions by a binomial model with parameter $\mu_{vz}$ in Definition~\ref{def:individual_model}.  
There are different ways to represent and estimate $\mu_{vz}$, such as using a logistic regression to represent $\mu_{vz}$ and estimating the regression parameters by the Maximum Likelihood Estimation (MLE) method according to the individual action logs of user $v$. 
In this work, to keep our framework flexible and general, we use a straightforward method to represent $\mu_{vz}$ --- that is, the probability that user $v$ is influenced by one of her neighbors to perform the action $z$.  Specifically, we define $\mu_{vz}$ as

%We first calculate $\mu$ and $\delta$, which require the knowledge of individual model, with parameter $\mu_v$, to each user. There are different ways to represent and estimate $\mu_v$, such as using a Logistic Regression to represent $\mu_v$, and estimate the regression parameters by Maximum Likelihood Estimation (MLE) method according to the individual action logs of user $v$. In this work, to keep our framework flexible and general, we use a simple method to represent $\mu_v$: the probability of user $v$ retweet a post when one of her followees retweets. Specifically, we define $\mu_v$ as 

\beq{
\label{eq:mu_app}
\mu_{v} = \frac{\sum_t x_{tvz}}{\sum_t \sum_{u|e_{vu}\in E} x_{tuz}}
}

Given $mu_{vz}$, we then calculate $\tau$ and $\delta$---the parameters of group behaviors in Eq.~\ref{eq:micro2meso}. Unfortunately, there is no effective way to estimate $\lambda$ (the exponential parameter of the observation time window $T$) automatically. Thus we define $\lambda$ manually and leave the automatic estimation method to our future work. In practice, we empirically set $\lambda$ as $0.008$, $0.1$, and $0.23$ in Weibo, Citation, and Flickr, respectively. We finally obtain the exponent $\alpha$ according to Eq.~\ref{eq:meso2macro}. 
%We use Model I to denote the method described above.

\textit{Model II.} As prior work~\cite{clauset2009power} and Figure~\ref{fig:macro} suggest, however, few empirical phenomena in practice obey power-laws for all observed data $x$. More often the power-law applies for values greater than some minimum point $x_{\text{min}}$, which can be understood as the truncation point of the empirical phenomena. 
We first calculate the truncation point $x_{\text{min}}$ by the MLE method described in ~\cite{clauset2009power}. 
We then use the posts whose retweeting numbers are no less than $x_{\text{min}}$ to estimate  parameters $\tau$ and $\delta$ by following the steps in Model I. 
After that, when normalizing the estimated PDF, we also only consider the tweets whose retweets are no less than the lower bound. 
%We use Model II to denote this method.

%We then ignore all posts with total retweeting number less than $x_{\text{min}}$ and use the remaining ones to estimate the parameters of our framework by following Model I. After that, when normalizing the estimated PDF, we also only consider the posts with total retweets greater or equal than the lower bound. We use Model II to denote this method. 

\begin{figure*}[t]
\centering
\mbox{
%\small
\subfigure[]{
\label{fig:pred_group_post}
\epsfig{file=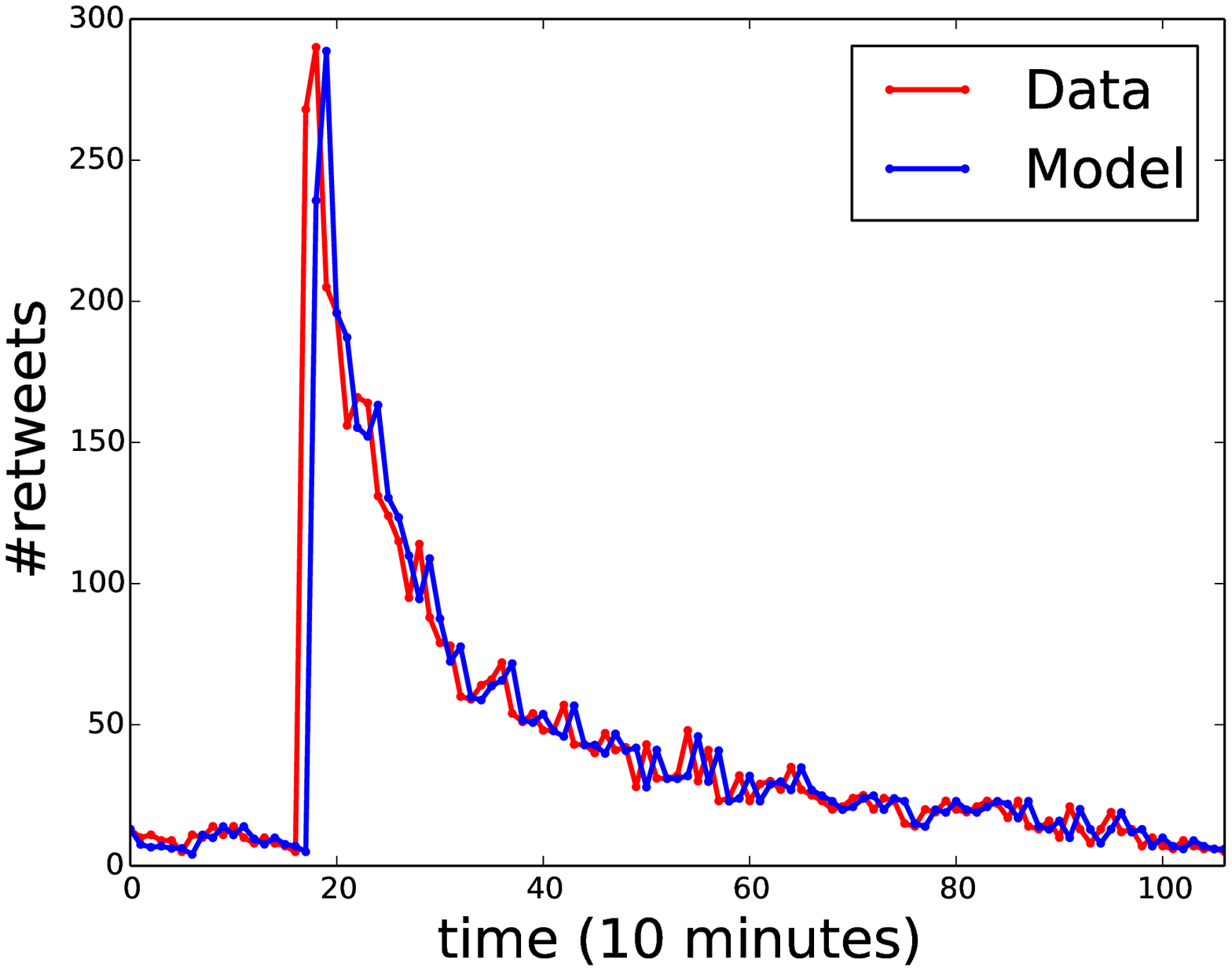, width=2in}
}
\subfigure[]{
\label{fig:pred_group_paper}
\epsfig{file=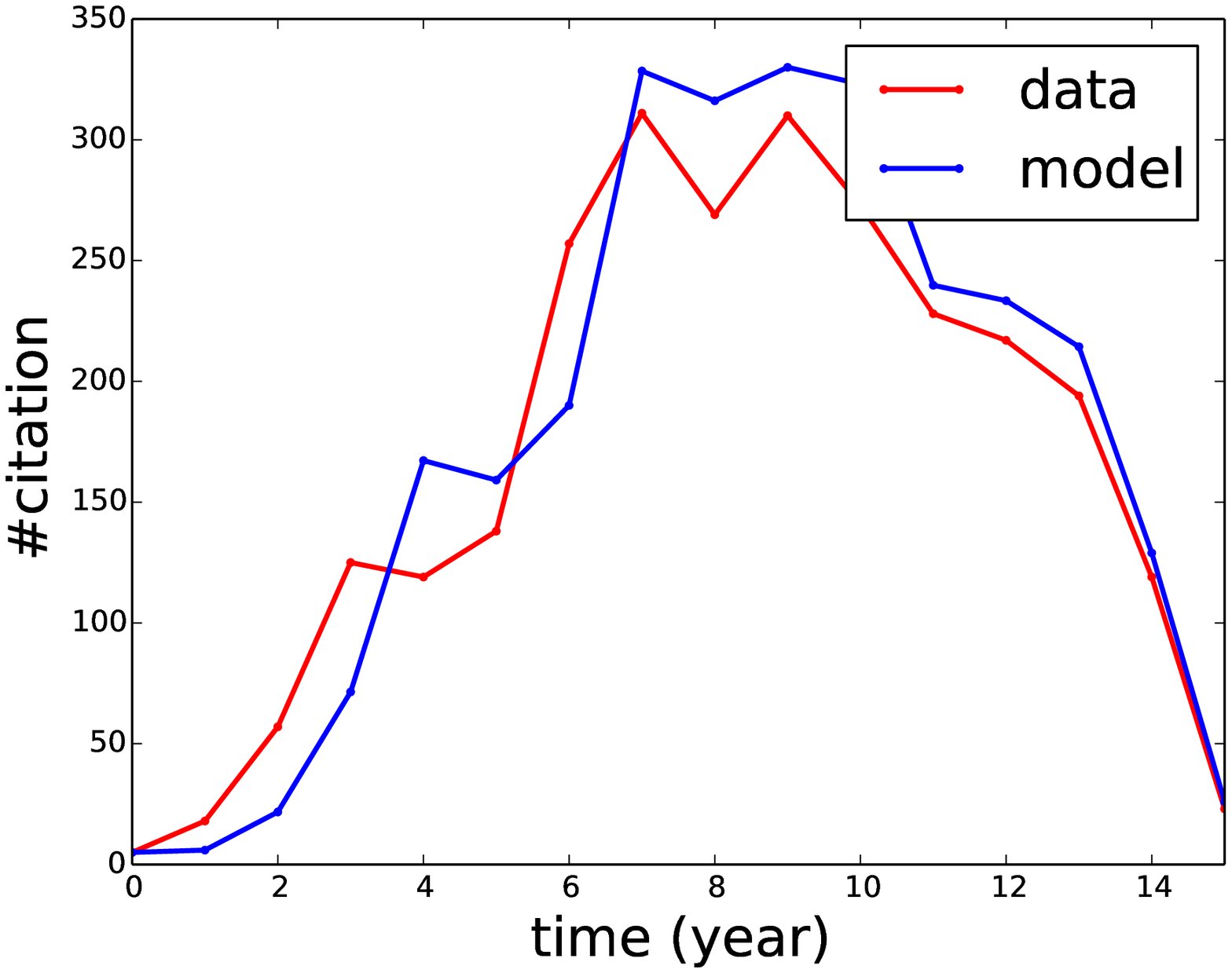, width=2in}
}
\subfigure[]{
\label{fig:pred_group_flickr}
\epsfig{file=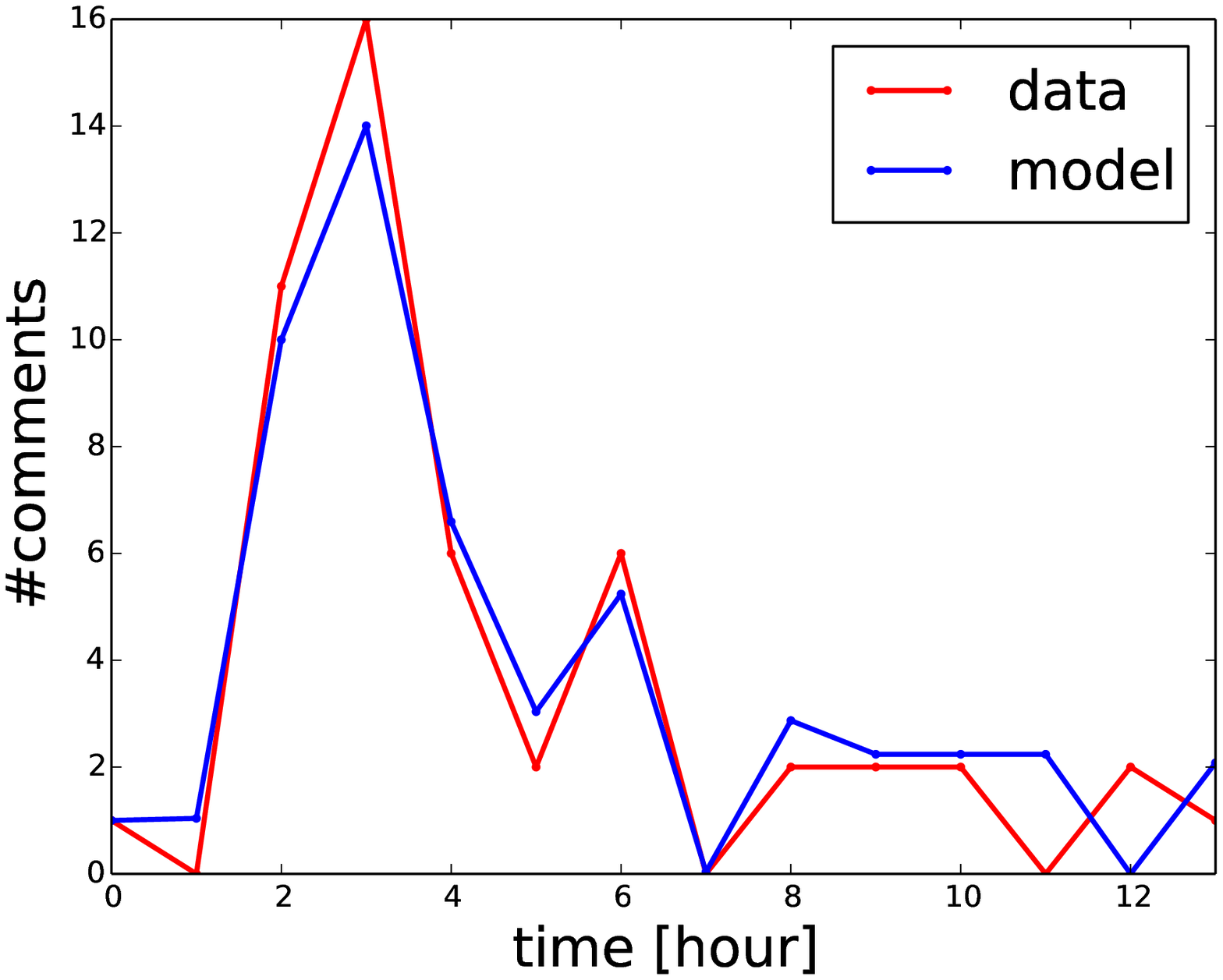, width=2in}
}
}
%\vspace{-0.1in}
\caption{Group behavior prediction. In each figure, the x-axis denotes the time that has passed since the source tweet (paper or photo) was published. The y-axis denotes the number of retweets (citations or comments). We demonstrate the observed data (truth) and the prediction results (model). } 
%\vspace{-0.07in}
\label{fig:group}
\end{figure*}

\vpara{Results.} Figure~\ref{fig:macro_pred} shows the results on three networks. All figures are plotted on log-log scales. Blue dots represent the distribution of real data. The red solid line is the power-law fitting by observing blue dots through the MLE method~\cite{clauset2009power}. Together with the real data, the fitted result can be used as ground truth. The results of Model I and Model II provided by our framework are represented by the green dashed line and yellow hollow dots, respectively.  

%\textcolor{red}{1. i would suggest to delete red line}

From Figure~\ref{fig:macro_pred}, the power-law distribution suggested by our framework (Model II) is clearly seen to be a good fit to the real data (blue points). % and the best fitted model (red line). 
Please recall that our framework only observes parts of the complete networks ($3.4\%$ in Weibo, $1.3\%$ in Citation, and $4.7\%$ in Flickr).
We also validate the results by quantifying the difference of distributions between Model II and real data via RSS using geometric binning. 
Table~\ref{tb:rss} reports the RSS results of corresponding methods. 
We can see that the performance of our framework can achieve RSS values at the orders of magnitude $10^{-10}$, $10^{-7}$, a nd $10^{-4}$ in the three networks, which indicates excellent numerical performance of the power-law fitting. 
We also notice that the estimated truncation points help our framework (Model II) better fit the heavy-tailed phenomena than Model I. 

%Through this experiment, 
We conclude that the emerging process of heavy-tailed phenomena can be modeled and explained well from the 
%random 
partially observed individual actions by our framework, on all datasets.

\vpara{Fitting Heavy-Tailed Phenomena using Group Behavior}
Next, we validate whether our proposed \model\ is able to capture the emerging process of heavy-tailed phenomenas from group behavior. Thus we conduct another heavy-tailed phenomena fitting task: Given a set of actions $\{z\}$ and a group behavior $n_z(t)$ for each action $z$ at each timestamp $t$, the goal is to estimate the network distribution $N_z$ that is defined in Definition~\ref{def:network}. 
%The goal is to infer the power-law distribution in the complete network $H$ as a whole from group behavior, i.e., the exponent parameter $\alpha$ of the heavy-tailed phenomena. 

Following the theoretical results in Section~\ref{sec:framework} and empirical results in Section~\ref{sec:observation}, $n_z(t)$ converges to a lognormal variable when $t$ is sufficiently large. Moreover, \model \ suggests that $N_z$ behaves as power-law when $n_z(t)$ follows lognormal distributions. Thus, our general idea here is to first estimate each timestamp's corresponding lognormal distribution over $n_z(t)$. Then we fit the heavy-tailed phenomena, based on the estimated lognormal. Specifically, for each timestamp $t$, we estimate the corresponding lognormal parameters by following the method introduced in~\cite{zaman2014bayesian}. We then calculate the exponent parameter $\alpha$ according to Theorem~\ref{theorem:powerlaw}. Please notice that we keep $\lambda$, the weighted parameter of the observation window, at the same setting as at the last task when calculating $\alpha$. 

We demonstrate the fitting results in Figure~\ref{fig:macro_pred}, by purple lines. We also present the RSS scores in Table~\ref{tb:rss}. As we can see, comparing with the individual actions (Model II), group behaviors provide more precise information about the lognormal parameters and obtain a better modeling result, which suggests \model \ bridges heavy-tailed phenomena and group behavior precisely.

\subsection{Group Behavior Prediction}
\label{sec:meso}

This task is to demonstrate whether our framework can model the aggregate effect of group behaviors from individual actions in real social networks. 

\vpara{Problem.}
Given an observed subnetwork $G=(V, E)$ of a complete network $H$, a set of individual actions $X_t=\{x_{tvz} | v\in V\}$ at timestamp $t$ in $G$, and $n_z(t)$ in network $H$, the goal is to infer the group behaviors $n_z(t+1)$ at the next timestamp $t+1$. 

\vpara{Setup.}
We separate the observed network $G$ introduced in Section~\ref{sec:exp_macro} into a training set and a test set. 
In Weibo, for each tweet $z$, we regard $\{n_z(t) | t \leq 500 \text{ minutes}\}$ as the training set, and the balance as the test set. 
In Citation, for each paper $z$, we regard $\{n_{z}(t) | t \leq 5 \text{ years} \}$ as the training set, and the rest as the test set. 
In Flickr, for each photo $z$, we use $\{n_{z}(t) | t \leq 2 \text{hours}\}$ as the training set. 
We then employ the training set to estimate the ``upward factor'' $U$ and ``downward factor'' $D$ of the group model in our proposed framework (see details in Eq.~\ref{eq:meso}). 
We finally calculate $n(t+1)$ by using $n_z(t)$, $U$, and $D$ according to Eq.~\ref{eq:meso}. 

%\vpara{Setting.} The task of collective action modeling is as follows. We are given an observed subnetwork $G=<V, E>$, an action $z$, a time stamp $t$, a set of individual actions $\{x_{tvz} | v \in V \}$, and $n_z(t)$---the number of users who are in the whole network $H$ and adopt action $z$ at time $t$. We are expected to calculate the collective action $n(t+1)$ at the next timestamp. %We then introduce the details of experimental setting on each dataset. 

%On the Weibo network, we define the user behavior as retweeting a particular post $d$. In our experiment, $H$ is the network consists of all users. Given a post $d$, the sub-network $G$ consists of the users retweeting $d$ at the first 200 minutes, and their followers/followees. Moreover, each individual action log $x_{tv}$ indicates whether a user $v$ in $G$ retweet $d$ or not. And $n(t)$ is the number of users in $H$ who retweet $d$ at time $t$. Notice that the ground truth of $n(t)$ is calculated according to the posting logs of users in $H$. The setting on other two networks are similar but with different user behaviors. 

%\vpara{Methods.} 
There are several methods for estimating factors $U$ and $D$. Here, we assume that the factors are constant over both users and time. Under this assumption, given any two timestamps $t_1$ and $t_2$, we are able to estimate the factors according to equations below:

\beqn{
\label{eq:estimate_factor}
\tiny
\besp{
%\ln \hat{U} = \frac{\frac{\ln n(t_1+1) - \ln n(t_1)}{y_{t_1}^-}+\frac{\ln n(t_2) - \ln n(t_2+1)}{y_{t_2}^-}}{\frac{y_{t_1}^+}{y_{t_1}^-}-\frac{y_{t_2}^+}{y_{t_2}^-}}
\ln \widehat{U}_{t_1t_2} &=& \frac{(\ln n(t_1+1) - \ln n(t_1))y_{t_2}^-+(\ln n(t_2) - \ln n(t_2+1))y_{t_1}^-}{y_{t_1}^+y_{t_2}^--y_{t_2}^+y_{t_1}^-}\\
\ln \widehat{D}_{t_1t_2} &=&\frac{(\ln n(t_1+1)-\ln n(t_1))y_{t_2}^+-(\ln n(t_2)-\ln n(t_2+1))y_{t_1}^+}{y_{t_1}^-y_{t_2}^+-y_{t_2}^-y_{t_1}^+}
}
\normalsize
} 

Please note that for different pairs of timestamps, the estimated results might be different. Hence, we use the average value of all possible configurations as the final reported results. Formally, let $\widehat{U}_{t_1t_2}$ and $\widehat{D}_{t_1t_2}$ be the factors estimated according to time $t_1$ and $t_2$ by Eq.~\ref{eq:estimate_factor}, we define $\widehat{U}=\frac{\sum_{t_1,t_2} \widehat{U}_{t_1t_2}}{C_T^2}$ and $\widehat{D}=\frac{\sum_{t_1,t_2} \widehat{D}_{t_1t_2}}{C_T^2}$, 

\hide{
\beqn{
\besp{
\widehat{U}=\frac{\sum_{t_1,t_2} \widehat{U}_{t_1t_2}}{C_T^2}  \\
\widehat{D}=\frac{\sum_{t_1,t_2} \widehat{D}_{t_1t_2}}{C_T^2}
}
}
}
\noindent where $T$ is the last timestamp in the training set.   

%In practice, the upward factor and downward factor may change over time. 
In practice, an alternative method is to assign each user different configurations of these two factors. However, our main goal in this paper is to provide the underlying mechanisms of the emergence of heavy-tailed phenomena. Thus we keep the assumption that the two factors are independent of both time and users, to simplify and generalize our proposed framework. We leave the user- or time-dependent factor definition and estimation for our future work.

%On the Weibo network, we focus on predicting $n(t)$ for a single post $d$. In our experiment, the observed subnetwork $G$ consists of the users retweeting $d$ at the first 200 minutes and their followers and followees. More formally, on the Weibo network, we are given an subnetwork $G=<V, E>$. For each query, which includes a time stamp $t$, a set of retweet logs $\{x_{tv} | v\in V\}$, and $n(t-1)$, the number of retweets $d$ receives in the whole network $H$, we are expected to return $n(t)$. 

\vpara{Results.} 
We report the experimental results of group behavior prediction in Figure~\ref{fig:group} in the three networks. 
In each figure, the $x$-axis denotes the duration since the source tweet (paper or photo) is posted by setting ten minutes (one year or one day) as the interval in Weibo (Citation or Flickr). 
The $y$-axis denotes the number of retweets (citations or comments) the source tweet (paper or photo) receives at each time interval. 
We plot the truth in real networks using red lines and the predicting results of \model \ with blue lines. 

Figure~\ref{fig:pred_group_post} presents the results for tweets with different popularity (indicated by the number of retweets) in Weibo. 
Clearly, we can see that the modeling results (blue lines) are well coupled with the real data (red lines) in different cases. % especially when modeling popular posts (indicated by the number of retweets). 
Our method successfully captures the upward and downward tendencies of each tweet's retweeting dynamics over time precisely. %Such as the peak appears at the 600th minute in Figure~\ref{fig:pred_group_1}.  

%Posts corresponding to (a), (b), and (c) in Figure~\ref{fig:group} receive more than 6.5k retweets for each (9566, 8505, and 6576 retweets respectively), while the last post receives 3889 retweets. 

Figure~\ref{fig:pred_group_paper} shows the results of two papers with different levels of citation counts (2262 and 122). 
Figure~\ref{fig:pred_group_flickr} shows the results of two photos with different numbers of comments (50 and 133). 
As in Weibo, group behaviors can be successfully inferred from individual citing and commenting actions in Citation and Flickr. 

We conclude that our framework can capture the aggregate effect of group behaviors from individual actions, and predict the trends of dynamic popularities in real social networks.

%\vpara{Subnetwork analysis.} 

%\subsection{Parameter Analysis}
%\reminder{Exp: how the choice of sub-networks affect the modeling results.}

\subsection{Information Burst Prediction}
\label{sec:burst}

We now describe ways in which to apply our framework to social applications. 
In this work, we focus on information burst prediction~\cite{Kleinberg:KDD2002Bursty,Myers:WWW2014}. 
Please note that the focus of the study is to demonstrate how our framework can help social applications.

%\vpara{Setting.} We apply our framework to model information bursts, where the likelihood users disseminate a piece of information rises to a peak. More specifically, the task of information burst modeling is, given a post $z$, a time stamp $t$, and the number of users retweet $z$ within a time window $[t-k, t]$ ($\{n_z(t') | t-k\le t' \leq t\}$), we aim to predict whether there will be an information bust at time $t+1$. We say a burst happens at time $t_1$ if $n_z(t_1)$ is the largest one within time window $[t_1-\Delta t, t_1+\Delta t]$, i.e., $\forall t_2 \in [t-\Delta t, t+\Delta t]$, we have $n_z(t_1) \ge n_z(t_2)$. In our experiment, we set $\Delta t$ as 60 minutes. 

\vpara{Problem.}
Given a tweet $z$ at timestamp $t$ and the number of users who retweet $z$ within the time window $[t-k, t]$ ($\{n_z(t') | t-k\le t' \leq t\}$), the goal is to predict whether there will be an information burst at timestamp $t+1$. Formally, we say a burst happens at time $t_1$ if $n_z(t_1)$ is the largest in the period ranging from one hour before to one hour after $t_1$, i.e., $\forall t_2 \in [t_1-1\textit{hr}, t_1+1\text{hr}]$, we have $n_z(t_1) > n_z(t_2)$. 
\hide{ 
\reminder{ what is $\Delta$t?
We say a burst happens at time $t_1$ if $n_z(t_1)$ is the largest one within time window $[t_1-\Delta t, t_1+\Delta t]$, i.e., $\forall t_2 \in [t-\Delta t, t+\Delta t]$, we have $n_z(t_1) \ge n_z(t_2)$. In our experiment, we set $\Delta t$ as 60 minutes. 
}}

\begin{figure}[t]
\centering
%\vspace{-0.06in}
\hspace{-0.3in}
\mbox{
\subfigure[]{
\label{fig:factor_analysis}
\epsfig{file=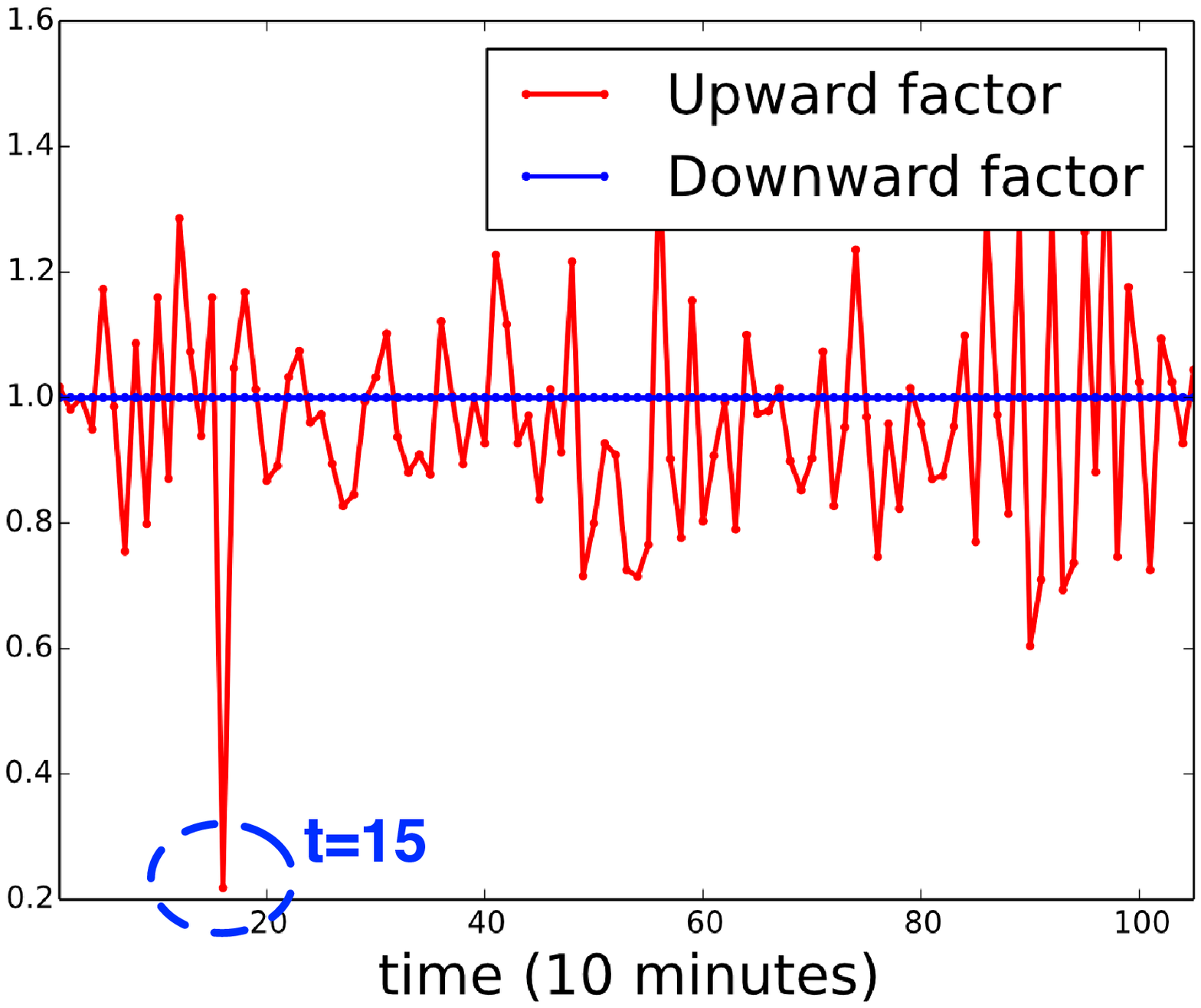, width=1.7in}
}
}
\hspace{-0.12in}
\mbox{
\subfigure[]{
\label{fig:burst_example}
\epsfig{file=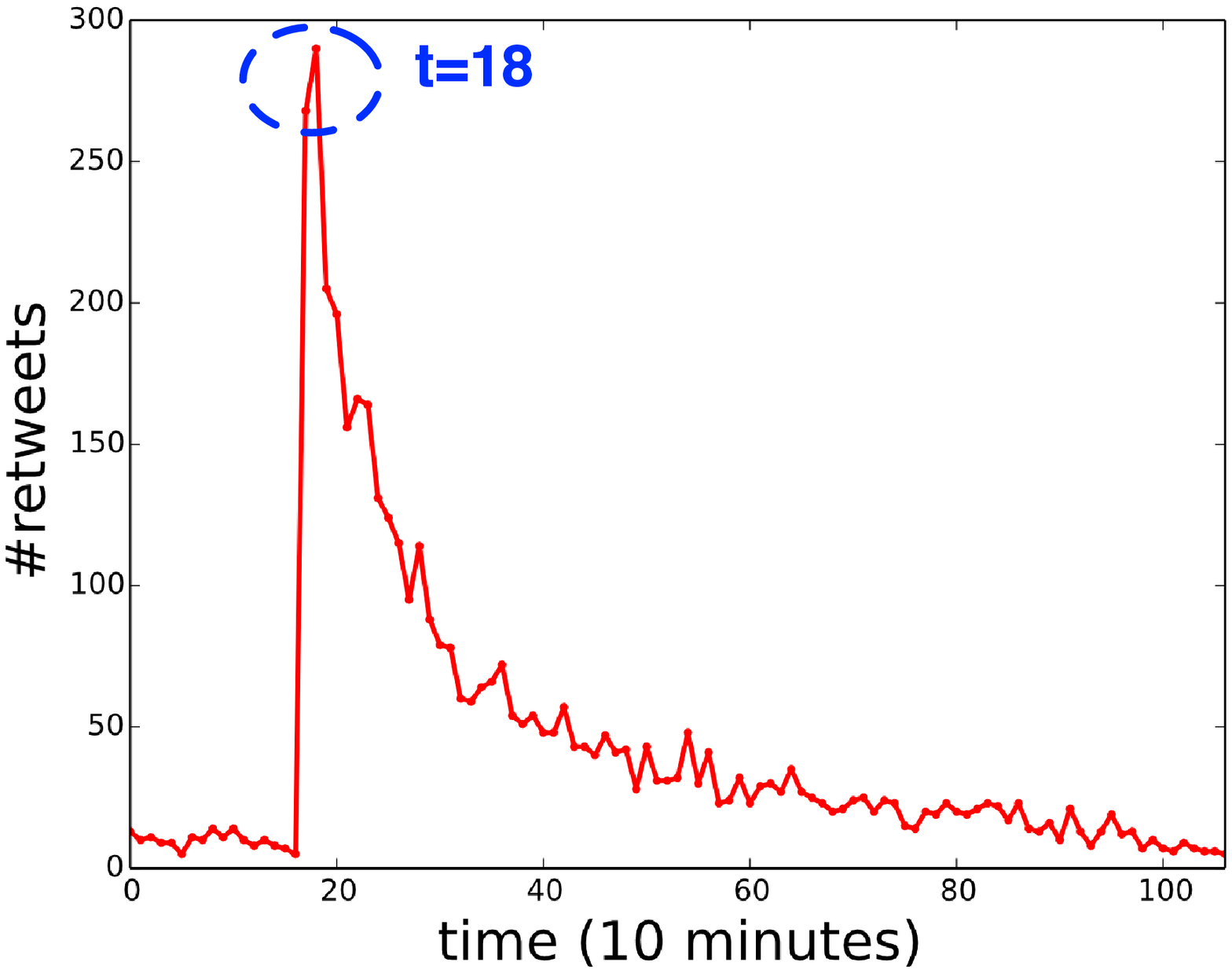, width=1.7in}
}
}
\hspace{-0.3in}
%\vspace{-0.03in}
\caption{Factor analysis. (a) The trend of the upward factor and the downward factor, corresponding to the tweet in (b), over time. (b) An example of information burst.}
\label{fig:factor} 
%\vspace{-0.07in}
\end{figure}

\vpara{Observation.}
In practice, the upward factor $U$ and downward factor $D$, instead of being constant, may change over time. We study the trends of these factors and present the results in Figure~\ref{fig:factor}. 
Due to space limitations, we use the factors corresponding to one tweet as an example. We observe similar results on other tweets. 
%Each of the four figures indicates how the upward and downward factors of the corresponding tweets in Figure~\ref{fig:group} evolve over time. 
%The $x$-th up/factor in the figure is calculated as the average value of $\widehat{U}_{ix}$ ($i \le x$). 

In Figure~\ref{fig:factor_analysis}, we can see that the downward factors are relatively stable around $1.0$, while the upward factors change frequently and sharply. 
A potential explanation is that a user's retweeting actions can influence others to retweet, while the decision not to retweet a tweet has limited effect on others' retweeting decisions. 
%At each time, in the observed sub-network $G$, there is a great number of users who do not retweet (\reminder{average number}). We call these users as inactive users. Thus the average affection of each inactive user is very limited. 
We further observe that when observing a peak or a valley in Figure~\ref{fig:factor_analysis}, the upward factor will achieve a peak in next few timestamps in Figure~\ref{fig:burst_example}. 
We conjecture that the burst of retweets is correlated with the variation of the corresponding upward factors. 

%, few times latter, there will be a peak appears in Figure~\ref{fig:group}. We conjecture that the retweet number of a post is correlated with the variation of the corresponding upward/downward factor. We next conduct an experiment to validate our discovery. 

\vpara{Setup.}
We apply the upward factors in our framework into the information burst prediction task. 
We first estimate $\widehat{U}_{t'}$ at each timestamp $t'$ in the observation time window $[t-k+1, t]$. 
We then use the estimated upward factors as features of classification models. 
We also consider a baseline method in which the number of retweets in previous steps $\{n_z(t')|t-k \le t' \leq t\}$ is used as the features for this task. 
As our goal is to provide evidence of the correspondence between upward factors and information burst, we simply use classic classifiers, logistic regression (LR) and support vector machine (SVM), to report the predictability. %due to their predictive ability and simplicity. 

%Figure~\ref{fig:peak}(a) shows the performance on Weibo network. The baseline methods, which are used for comparison, regard $\{n_z(t')|t-k \le t' \leq t\}$, instead of upward factors, as features. We consider two models in our experiments: Support Vector Machine (SVM) and Logistic Regression (LR). 

\vpara{Results.}
%One may think that observing $n_z(\cdot)$ directly can provide more information. However, $n_z(\cdot)$ in the past can hardly predict sudden busts, like the one in Figure~\ref{fig:pred_group_2}, which comes from nowhere. At the meanwhile, as we observed previously, a peak or a valley of the upward factor trend is a signal of an information burst. As Figure~\ref{fig:pred_feature} shows, models considering upward factors outperforms ones regarding $n_z(\cdot)$ as features. Especially, SVM achieves 0.9 in the term of recall by using upward factors. 
Figure~\ref{fig:peak_pred} shows the predictive performance. 
We can clearly see that by our methodology, both simple models significantly outperforms the comparison method (using previous retweets information---$n_z(\cdot)$ as features). 
Moreover, the methods that involve using upward factors derived from our framework can achieve an F1 score of 0.8, which demonstrates the predictability of bursty phenomena in social media.  
In terms of precision and recall, the performance is still promising.

We further examine how the length of the observation window influences the prediction performance in Figure~\ref{fig:peak_pred}. 
We observe that both LR and SVM achieve the best and stable performance when the observation window reaches 200 minutes (around three hours). 
We conclude that the predictability of bursty phenomena in information diffusion is highly correlated to the observation window, and information burst is more predictable when conducted over a sufficient timeframe---only three hours in Weibo. 
%The observation windows less than 200 minutes are insufficient for us to extrapolate the future bust. 

%In conclusion, the trend of the upward factors demonstrates a strong signal of the bursty phenomena in information diffusion. 
%Our framework can be successfully used in this applications.
Further study of other applications, such as cascade prediction, scientific impact modeling, and popularity forecast, is an area of work for the near future.

\begin{figure}[t]
\centering
\hspace{-0.3in}
\mbox{
\subfigure[]{
\label{fig:pred_feature}
\epsfig{file=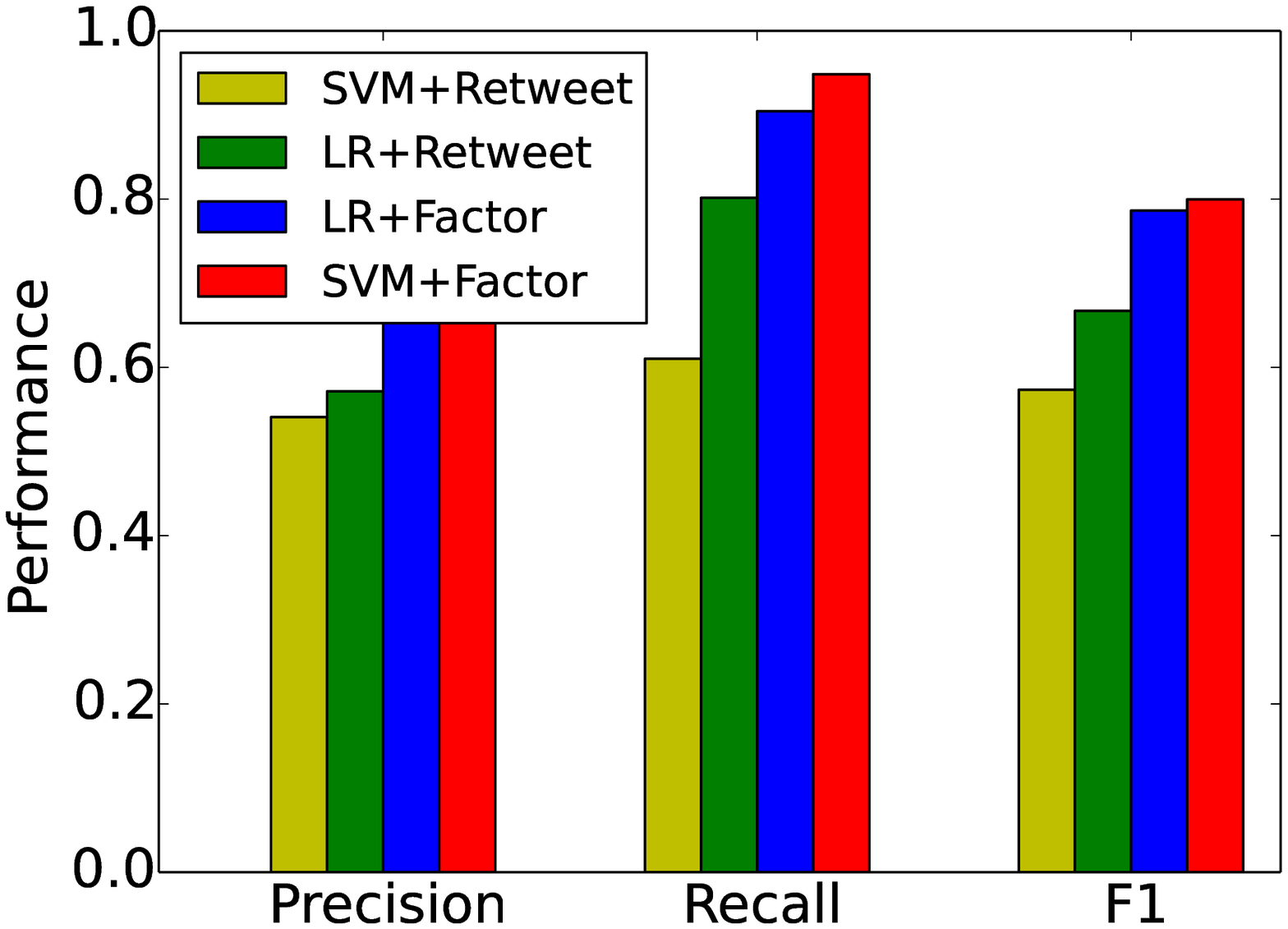, width=1.7in}
}
}
\hspace{-0.12in}
\mbox{
\subfigure[]{
\label{fig:peak_pred}
\epsfig{file=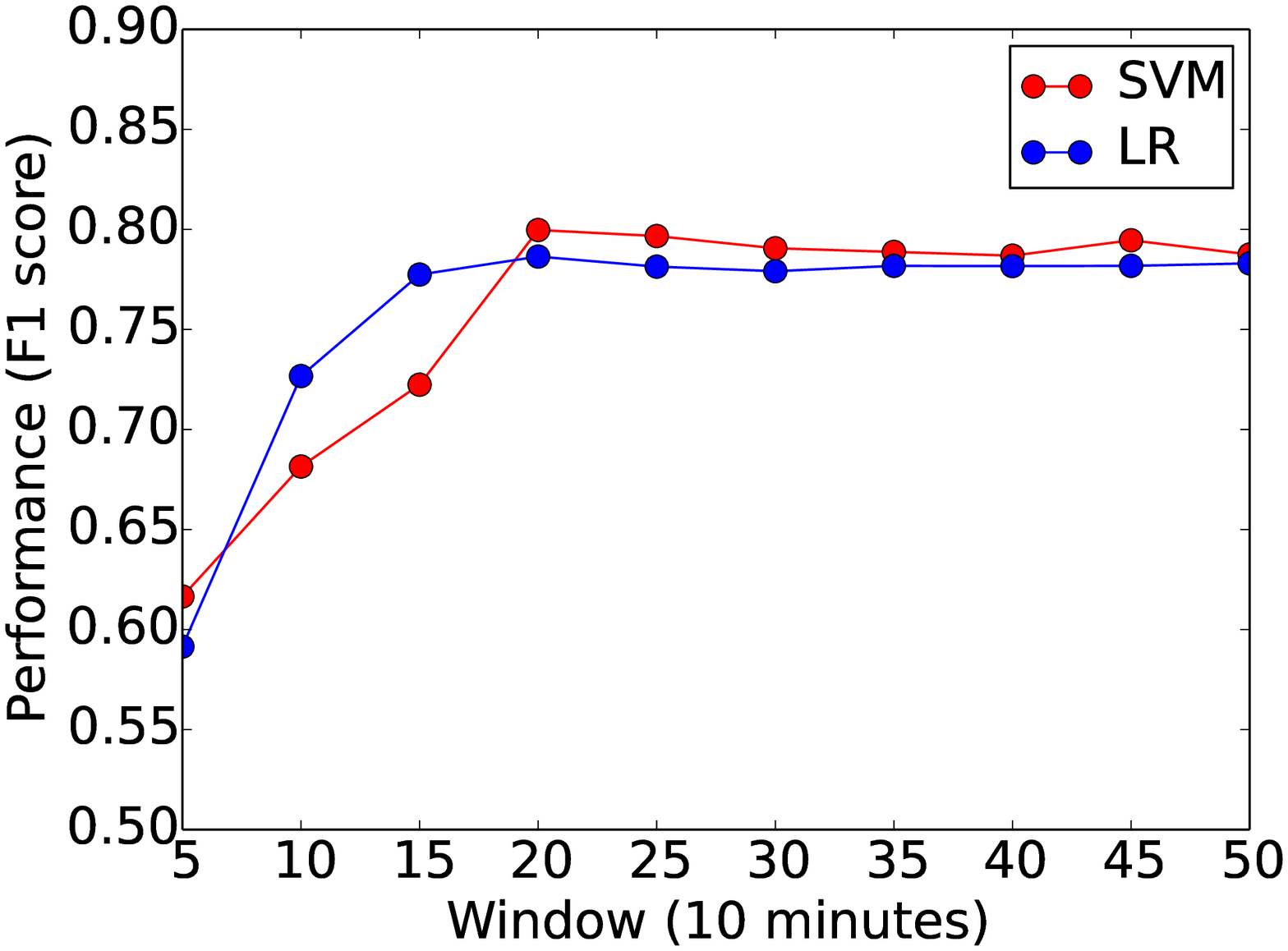, width=1.7in}
}
}
\hspace{-0.3in}
%\vspace{-0.1in}
\caption{Apply \model \ to information burst prediction. (a) Prediction performance by two classification models considering historic retweet times (SVM+Retweet and LR+Retweet) and models utilizing the upward factors in \model \ (SVM+Factor and LR+Factor). We set the observation window $k$ as 200 minutes. (b) Prediction performance by varying the observation windows from 50 minutes to 500 minutes. \normalsize} 
%\vspace{-0.07in}
\label{fig:peak}
\end{figure}

%\vpara{Method.} To apply our framework into the information burst modeling task, we first estimate $\widehat{U}_{t'}$ for any time stamp $t'$ in the observation time window $[t-k+1, t]$. We then use the $k$ estimated upward factors as features to train a classification model. Figure~\ref{fig:peak}(a) shows the performance on Weibo network. The baseline methods, which are used for comparison, regard $\{n_z(t')|t-k \le t' \leq t\}$, instead of upward factors, as features. We consider two models in our experiments: Support Vector Machine (SVM) and Logistic Regression (LR). 

\section{Related Work}
\label{sec:related}

%Since the inception of network science, a significant body of work has been devoted to modeling the heavy-tailed phenomena of individual actions, such as lognormal~\cite{mood1950introduction} and power-law distributions~\cite{newman2005power,clauset2009power}. 
%
%Since the inception of network science, the emergent 
The heavy-tailed phenomena---such as power-law and lognormal distributions---have been discovered to be ubiquitous in a variety of network systems~\cite{mood1950introduction,newman2005power,clauset2009power}. 
%Integral to the heavy-tailed phenomena in networks is the individual action over a timespan. 

\textbf{Power-laws} have been widely observed in both nature and human society through extensive studies. 
Power-laws are characterized by the following probability distribution: 
\beq{
\label{eq:power_law}
f(x) =C x^{-\alpha} \nonumber
}
\noindent where $\alpha$ is the exponent parameter and $C$ is the normalization term. 
Essentially, power-laws model the functional relationships between two quantities, where one quantity varies as a power of another~\cite{newman2005power}. 
This statistical law was first revealed by the degree distributions of Internet graphs and the World Wide Web in 1999~\cite{faloutsos1999power,barabasi1999emergence}. 
Besides having been found in Internet and WWW networks, power-law distributions have been discovered in publication citations~\cite{Radicchi:PNAS2008}, phone calls~\cite{Christos:KDD08}, tie strengths~\cite{Onnela:PNAS2007}, and so on. 
%The exponent parameter $\alpha$ generally locates in the range from 2 to 3 in previous observations. 

Along with power-laws, extensive studies have discovered that \textbf{lognormal} distributions are satisfied in dynamic networks. 
Conceptually, a lognormal distribution is defined as ``a continuous probability distribution of a random variable whose logarithm is normally distributed.''~\cite{mood1950introduction}. 
Formally, its density function can be expressed by the following formula: 
\beq{
\label{eq:lognormal}
f(x)=\frac{1}{x\delta\sqrt{2\pi}}\exp[-\frac{(\ln x - \mu)^2}{2\delta^2}] \nonumber
}
%\vspace{-0.1in}
\noindent where $\mu$ and $\delta$ are, respectively, the mean and standard deviation of the variable's natural logarithm. 
Huberman et al.~\cite{huberman1999internet} presented that the number of pages at a given site is lognormally distributed for every timestmap in the environment of WWW. 
Following this, similar studies also unveiled that part of the WWW pages demonstrate a truncated lognormal distribution~\cite{Bi:KDD2001,Pennock:PNAS02}. 
Stouffer et al.~\cite{stouffer2006log} discovered that the time for people to respond to emails also follows a lognormal distribution. 
Although a tremendous amount of exploration into the modeling of heavy-tailed phenomena in networks has been done, the mechanism of how these phenomena emerge from individual actions has  received little attention. 

An individual behavior, such as an action adoption, is an integral element of the heavy-tailed phenomena in social networks. 
A large body of studies has been focused on the modeling and predicting of individual behaviors. 
%Yang, Xu, and Hong et al.~\cite{ziyang:10,Xu:SIGIR2012,Hong:WSDM2013} 
Xu, and Hong et al.~\cite{Xu:SIGIR2012,Hong:WSDM2013} modeled user online preferences and predicted individual adoption decisions in Twitter.  
%Furthermore, Jin, Yan, and Varol et al. proposed a series of models to capture user  adoption behaviors in location-based~\cite{Jin:KDD2014} and temporal~\cite{Yin:SIGMOD2014,Varol:WebSci2014} social networks. 
However, 
%the understanding of 
the connections between individual behavior and collective dynamics are still not well studied. 
Recently, Rybski et al.~\cite{Rybski:SR12} studied individual behaviors and further unveiled the origin of collective behaviors of the social community with both clustering and long-term persistence. 
Muchnik et al.~\cite{Muchnik:SR14} demonstrated that heavy-tailed degree distributions in networks are causally determined by similarly skewed distributions of human activity. 
Ghosh et al.~\cite{ghosh2014interplay} studied the interplay between a dynamic process and the structure of the network on which it is defined. 
The major difference between our work and previous work lies in that
%, rather than modeling individual behaviors directly into network phenomena, 
we theoretically and empirically demonstrate how the integration of individual actions in social networks activates the emergence of group behavior, from which network distributions arise as a whole.

%Despite debate over the distribution forms of macro- or meso-level societal observations, 

\hide{

%The time for people to respond to emails follows a lognormal distribution~\cite{stouffer2006log}, and call durations in call centers follow a lognormal distribution~\cite{brown2005statistical}.  

\subsection{Heavy-Tailed Distributions}
\label{sec:related_distribution}

Conventional wisdom holds that heavy-tailed distributions are ubiquitous in an extremely wide range of phenomena. Specifically, we provide background on lognormal and power-law distribution in this section. 

\para{Lognormals.}
A lognormal is a distribution whose logarithm is a Gaussian. Lognormal is first defined by Mood~\cite{mood1950introduction} and its density function can be expressed by the formula: 

\beq{
\label{eq:lognormal}
f(x)=\frac{1}{x\delta\sqrt{2\pi}}\exp[-\frac{(\ln x - \mu)^2}{2\delta^2}]
}

\noindent where $\mu$ and $\delta$ are parameters.  Lognormal has been studied in several areas. In physics, different models have been proposed to explain the origin of lognormal response times. For example, Ulrich et al.~\cite{ulrich1993information} discussed some information processing models that may generate a lognormal response time. Van Breukelen~\cite{vanbreukelen1995parallel} presented two parallel models which are also compatible with lognormal response time distributions. In computer science, Huberman et al.~\cite{huberman1999internet} studied the growth dynamics of the Word Wide Web. They presented that the number of pages at a given site is lognormally distributed for every time step. Other types of lognormally distributed response times has been observed in other applications. For instance, the time for people to respond to emails follows a lognormal distribution~\cite{stouffer2006log}, and call durations in call centers follow a lognormal distribution~\cite{brown2005statistical}.

\para{power-laws.} 
power-laws have been observed in an overwhelming number of settings including graphs and social networks. power-laws are characterized by the following probability distribution: 

\beq{
\label{eq:power_law}
f(x) =\alpha x^{-\beta}
}

\noindent where $\alpha$ and $\beta$ are parameters. Examples of power-law distributions in graphs include the Internet Autonomous System graph with exponent $\beta=2.1-2.2$~\cite{faloutsos1999power}, the Internet router graph with exponent $\sim 2.48$~\cite{faloutsos1999power, govindan2000heuristics}, the in-degree and out-degree distributions of subsets of the world wide web with exponents $2.1$ and  $2.38-2.72$ respectively~\cite{barabasi1999emergence, kumar1999extracting, broder2000graph}. Newman~\cite{newman2005power} provides a comprehensive list of such work. 

}%end of hide

% log-normal
%In our work, we find that the response time for a post or one of a specific user's posts being retweeted follows log-normal distribution. The log-normal distribution is defined by Mood~\cite{mood1950introduction} and has been studied in several areas. In physics, different models have been proposed to explain the origin of log-normal response times. For example, Ulrich et al.~\cite{ulrich1993information} discussed some information processing models that may generate a log-normal response time. Van Breukelen~\cite{vanbreukelen1995parallel} presented two parallel models which are also compatible with log-normal response time distributions. In computer science, Huberman et al.~\cite{huberman1999internet} studied the growth dynamics of the Word Wide Web. They presented that the number of pages at a given site is log-normally distributed for every time step. Other types of log-normally distributed response times has been observed in other applications. For instance, the time for people to respond to emails follows a log-normal distribution~\cite{stouffer2006log}, and call durations in call centers follow a log-normal distribution~\cite{brown2005statistical}.  
\section{Conclusion}
\label{sec:conclusion}

In this paper, we study a novel problem of modeling the interplay between individual behavior and network distributions. We propose a unified framework \model\ to model individual behavior and network distributions together. The framework offers a way to explain how group behavior has  developed and evolves over time based on individual  actions, and to understand how network distributions such as power-law (or heavy-tailed phenomena) can be explained by group behavioral patterns.
The framework is flexible and can benefit many applications. We apply \model\ to three different networks: Tencent Weibo, Citation, and Flickr.
%, to model heavy-tailed distributions from partially observed individual actions, and to predict the formation of group behaviors. 
Our experimental results show that \model \ is able to model emerging network distributions in social networks from individual actions.
Moreover, we use information burst prediction as an application to quantitatively evaluate the predictive power of \model.
%Our results on the Weibo dataset show that our framework's prediction performance is up to 30\% better than that of several alternative methods.

%\small
\balance
\bibliographystyle{abbrv}
\bibliography{reference}  % sigproc.bib is the name of the Bibliography in this case

%\vspace{-0.1in}

\normalsize

\end{document}